\documentclass[11pt]{article}
\usepackage{graphicx, float} 
\usepackage{amsmath,amssymb,amsthm}
\usepackage{cite}
\usepackage{url}
\usepackage{hyperref}
\usepackage{bm}
\usepackage{physics}
\usepackage{microtype}
\usepackage{setspace}
\usepackage{abstract}
\hypersetup{
    colorlinks=true,
    linkcolor=blue,
    citecolor=blue,
    urlcolor=blue
}

\usepackage{graphicx,epstopdf} 
\usepackage[caption=false]{subfig} 
\usepackage[top=1in, bottom=1in, left=1in, right=1in]{geometry}

\newtheorem{prop}{Proposition}[subsection]

\title{A quantum model of opinion dynamics on networks}

\author{Weiqi Chu} 
\date{Department of Mathematics and Statistics \\ University of Massachusetts Amherst}

\begin{document}

\maketitle

\begin{abstract}
Classical models of opinion dynamics represent individual opinions as scalar or vector values governed by the classical probability theory, either as deterministic quantities or random variables. This framework does not account for empirically observed phenomena such as cognitive ambivalence (where an individual simultaneously holds conflicting views) and order effects (where survey responses depend on the order in which questions are asked). We propose a quantum model of opinion dynamics in which each agent's cognitive state is represented by a density matrix that encodes both the expressed opinion and cognitive ambivalence. Survey questions become non-commuting self-adjoint operators, which provides a principled explanation for order effects. Our model also identifies quantities without classical counterparts, including quantum coherence and pairwise opinion covariances. Under a product state approximation, the quantum model reduces to the classical Friedkin--Johnsen opinion model. We test the framework on synthetic and real-world networks and observe that pairwise correlations follow network-dependent transient dynamics but converge to the same steady state regardless of the network, and that quantum coherence decays exponentially at a rate independent of the network.
\end{abstract}

\section{Introduction}
Opinion dynamics studies how individual beliefs evolve and form through social interaction~\cite{castellano2009statistical}. Classical opinion models represent each agent's opinion as a scalar or vector and prescribe rules on how agents update their opinions based on interactions with their neighbors~\cite{li2026bounded,sirbu2016opinion}. The DeGroot model~\cite{degroot1974reaching} updates agents' opinions with a weighted average over their neighbors, and the Friedkin--Johnsen (FJ) model~\cite{friedkin1990social} extends the DeGroot model by introducing a bias toward each agent's anchor opinion. In addition, bounded-confidence models such as the Deffuant--Weisbuch and Hegselmann--Krause models~\cite{deffuant2000mixing,hegselmann2002opinion} allow interactions only between agents whose opinion differences are less than a prescribed threshold. Another class of models considers the inherent random effects in human interactions. The classical voter model~\cite{holley1975ergodic} updates each agent by adopting the opinion of a randomly chosen neighbor, and its noisy variants introduce spontaneous flips to account for external environment influence~\cite{carro2016noisy}. Social contagion models view opinion adoption as an infection process, analogous to epidemic spreading~\cite{castellano2009statistical}.

Despite their success, both deterministic and random models assume the opinions and viewpoints are definite at any given time, obeying the Kolmogorov axioms of classical probability. This assumption becomes limiting when modeling phenomena that involve human judgment and cognition. One such phenomenon is the order effect~\cite{moore2002measuring}, where survey respondents give different answers to the same question, such as approval ratings for political figures, when questions are asked in different orders. Another example is the disjunction fallacy~\cite{busemeyer2012quantum}, where people rate the likelihood of two events occurring together as lower than the likelihood of either event alone, violating the classical probability addition rule. These phenomena are difficult to reconcile with classical opinion models, and quantum cognition addresses these limitations by replacing the classical Kolmogorov axioms with the non-commutative probability framework of quantum mechanics~\cite{busemeyer2012quantum,pothos2022quantum}, in which belief states are represented as unit vectors in a Hilbert space and modeling expressed opinions as projective measurements. Busemeyer et al. explain order effects and cognitive fallacies from the non-commutativity of measurement operators~\cite{busemeyer2012quantum} and make parameter-free predictions on the structure of order effects that have been confirmed across large survey datasets~\cite{wang2013quantum,pothos2022quantum}.

Recent efforts have extended the quantum cognition framework to multi-agent dynamics on networks~\cite{guo2025quantum,bottcher2024complex,sans2026modeling}, in which the opinion evolution is governed by unitary or Hamiltonian dynamics, which are time-reversible and energy-conserving. However, social interactions and influence are inherently irreversible, where repeated interactions reduce cognitive ambivalence and enhance opinion crystallization over time~\cite{zaller1992nature,luttrell2016ambivalence}. Moreover, most existing quantum opinion models do not establish a formal connection to classical models. A dissipative quantum framework is therefore needed to explain cognitive phenomena such as order effects and ambivalence, and to establish a formal connection to classical opinion models.

The Lindblad master equation~\cite{lindblad1976generators,gorini1976completely} models open quantum systems by describing irreversible dissipative interactions via jump operators. It has found broad success in quantum optics, condensed matter physics, and quantum information~\cite{breuer2002theory,nielsen2010quantum}, and has recently attracted attention in cognitive science and the social sciences~\cite{khrennikov2023open}.
Asano et al.~\cite{asano2011application} introduced the Lindblad equation for belief updating in cognitive psychology and showed that its equilibrium solution explains the disjunction effect. Khrennikova et al.~\cite{khrennikova2014application} applied it to political party governance and voting behavior. In game theory, Asano and Khrennikov~\cite{asano2026quantum} showed that dissipative quantum evolution stabilizes non-Nash outcomes such as those in the Prisoner's Dilemma. These works, however, focus on single-agent decision making and do not address multi-agent interactions, collective behaviors, or influence of network structure.

We model opinion dynamics on social networks using the Lindblad framework, representing each agent's cognitive state by a density matrix that encodes expressed opinion and cognitive ambivalence. Opinions arise naturally as expectation values of measurement observables under this framework. Survey questions act as self-adjoint operators whose non-commutativity gives a principled account of order effects. We also derive governing equations for the single-agent marginal density matrix and show that projecting onto the opinion observable exactly recovers the classical FJ model. Our model gives access to quantities without classical counterparts, such as coherence magnitude and phase. We test our framework numerically on synthetic networks and a real-world social network from the Facebook-100 dataset, comparing the full quantum model with its classical reduction and examining the role of quantum coherence in opinion formation.

We organize the paper as follows. In Section~\ref{sec:representation}, we introduce the quantum representation of cognitive states and the opinion observable, and establish a framework for measuring order effects. Section~\ref{sec:models} introduces the Lindblad master equation for opinion dynamics and derives the reduced single-agent dynamics under the product state approximation. Once projected onto the opinion observable, the master equation recovers the continuous-time FJ opinion model. In Section~\ref{sec:numerics}, we present numerical results on synthetic and real-world networks. We conclude in Section~\ref{sec:conclusion}. Appendices~\ref{app:marginal} and~\ref{app:coherence} contain the derivations of the reduced master equation and the coherence dynamics. Our code is available at \href{https://github.com/weiqichu/quantum\_opinion\_model\_Lindblad\_FJ}{https://github.com/weiqichu/quantum\_opinion\_model\_Lindblad\_FJ}.

\section{A quantum representation of opinions}
\label{sec:representation}
\subsection{The density matrix and opinion observable}

Let $\ket{0} = (1,0)^\top$ and $\ket{1} = (0,1)^\top$ be the two basis states in $\mathbb{C}^2$. Each agent's belief state is a unit-length vector in $\mathbb{C}^2$~\cite{busemeyer2012quantum}, represented as a superposition
\begin{equation}
\ket{\psi} = \alpha\ket{0} + \beta\ket{1}, \quad |\alpha|^2 + |\beta|^2 = 1\,.
\label{eq:superposition}
\end{equation}
We define the opinion observable as
\begin{equation}
\hat{O} = \begin{pmatrix} 1 & 0 \\ 0 & -1 \end{pmatrix}\,,
\label{eq:sigmaZ}
\end{equation}
whose eigenvalues are $1$ and $-1$, with eigenstates $\ket{0}$ and $\ket{1}$, corresponding to full commitment to opinion~A and opinion~B, respectively. 
The superposition \eqref{eq:superposition} captures cognitive ambivalence~\cite{kaplan1972ambivalence}, whereby an agent entertains conflicting views without having settled on a definite opinion. 
Upon measurement, $|\alpha|^2$ and $|\beta|^2$ are the probabilities of observing opinion~A and opinion~B, respectively.

We consider a system of $n$ agents whose interactions are encoded in a directed weighted graph with edge set $E$ and influence matrix $W$, where $w_{ij}$ denotes the influence of agent $j$ on agent $i$. 
We require $W$ to be row-stochastic with no self-loops, so that $w_{ij} \geq 0$, $w_{ii} = 0$, and $\sum_j w_{ij} = 1$ for all $i$.
We model each agent $i \in \{1,\ldots,n\}$ by a density matrix $\rho_i \in \mathcal{D}(\mathcal{H}_i)$, where $\mathcal{H}_i = \mathbb{C}^2$ and $\mathcal{D}(\mathcal{H}_i)$ denotes the set of $2\times 2$ positive-semidefinite Hermitian matrices with unit trace~\cite{nielsen2010quantum}. 
We generalize the pure state $\rho_i = \ket{\psi}\bra{\psi}$ to mixed states $\rho_i = \sum_k p_k \ket{\psi_k}\bra{\psi_k}$, with $p_k \geq 0$ and $\sum_k p_k = 1$, which capture a probabilistic mixture of pure states. We describe the joint state of all $n$ agents by $\rho \in \mathcal{D}(\mathcal{H})$, where $\mathcal{H} = \bigotimes_{i=1}^n \mathcal{H}_i$ has dimension $2^n$. 
We recover the state of agent $i$ by taking the partial trace, $\rho_i = \mathrm{Tr}_{\neq i}(\rho)$.

In quantum mechanics, each observable is represented by a self-adjoint operator whose eigenvalues are the possible measurement outcomes, and the expected outcome of an observable $\hat{O}_i$ is $\langle \hat{O}_i \rangle = \mathrm{Tr}(\rho_i \hat{O}_i)$.
Recall the opinion observable $\hat{O}$ in Eq.~\eqref{eq:sigmaZ}; we obtain the opinion of agent $i$ as $x_i = \mathrm{Tr}(\rho_i \hat{O})$.
In the quantum framework, however, $\rho_i$ encodes more information than $x_i$ alone, carried by its off-diagonal elements.
The off-diagonal entry $\rho_i^{01} = \bra{0}\rho_i\ket{1}$ is the quantum coherence of agent $i$.
The quantum coherence $\rho_i^{01}$ takes the polar form
\begin{equation}
\rho_i^{01} = \frac{c_i}{2}\,e^{-i\phi_i}\,,
\label{eq:polar}
\end{equation}
which defines the coherence magnitude $c_i = 2|\rho_i^{01}| \in [0,1]$ and the coherence phase $\phi_i = -\arg(\rho_i^{01}) \in [0,2\pi)$.
When $c_i = 0$, the density matrix reduces to a classical probabilistic mixture in which the agent holds an uncertain but definite opinion, without simultaneously entertaining conflicting views.
When $c_i \neq 0$, the agent occupies a genuine quantum superposition, a property we connect to order effects in Section~\ref{sec:order_effects}.

\subsection{A quantum account of order effects}\label{sec:order_effects}

Order effects are the empirically documented phenomenon in which the answer to a question depends on the order in which questions are asked~\cite{moore2002measuring,wang2013quantum}.
Let $\hat{\Pi}_A$ and $\hat{\Pi}_B$ be two self-adjoint operators associated with questions $Q_A$ and $Q_B$, acting on $\mathcal{H}_i$, with spectral decompositions
\begin{equation}
\hat{\Pi}_{\alpha} = \sum_k \lambda_k^{\alpha} \ket{\psi_k^{\alpha}}\bra{\psi_k^{\alpha}}\,, \quad \alpha \in \{A, B\}\,,
\label{eq:spectral}
\end{equation}
where $\lambda_k^{\alpha} \in \mathbb{R}$ are the eigenvalues and $\ket{\psi_k^{\alpha}}$ are the corresponding eigenstates.
We denote the rank-1 projectors $\hat{P}_k^{\alpha} = \ket{\psi_k^{\alpha}}\bra{\psi_k^{\alpha}}$.
By the L\"{u}ders projection rule~\cite{luders2006concerning}, measuring $Q_A$ on state $\rho_i$ and obtaining outcome $\lambda_k^A$ collapses the state to
\begin{equation}
\tilde{\rho}_i^{(1)} = \frac{\hat{P}_k^A\,\rho_i\,\hat{P}_k^A}{\mathrm{Tr}(\hat{P}_k^A\,\rho_i)}\,,
\label{eq:luders}
\end{equation}
with probability $\mathrm{Tr}(\hat{P}_k^A\rho_i)$.
When $[\hat{\Pi}_A, \hat{\Pi}_B] \neq 0$, the two observables are incompatible, and the order of measurement affects subsequent probabilities, producing an order effect.

We quantify order effects in two ways.
The first measures the joint probability of outcomes $\lambda_k^A$ and $\lambda_{\ell}^B$ under both orderings.
From the L\"{u}ders rule~\eqref{eq:luders}, the joint probability of obtaining $\lambda_k^A$ then $\lambda_{\ell}^B$ is
\begin{equation}
P_{AB} = \mathrm{Tr}(\hat{P}_k^A\rho_i)\mathrm{Tr}(\hat{P}_{\ell}^B\tilde{\rho}_i^{(1)}) = \mathrm{Tr}(\hat{P}_{\ell}^B\hat{P}_k^A\rho_i\hat{P}_k^A)\,,
\label{eq:PAB}
\end{equation}
and symmetrically $P_{BA} = \mathrm{Tr}(\hat{P}_k^A\hat{P}_{\ell}^B\rho_i\hat{P}_{\ell}^B)$ for the reversed order.
We define the order-effect asymmetry as
\begin{equation}
\Delta P_i^{(1)} = P_{AB} - P_{BA} = \mathrm{Tr}(\hat{D}^{(1)}\rho_i)\,,
\label{eq:DeltaP1}
\end{equation}
where $\hat{D}^{(1)} = \hat{P}_k^A\hat{P}_{\ell}^B\hat{P}_k^A - \hat{P}_{\ell}^B\hat{P}_k^A\hat{P}_{\ell}^B$ is a Hermitian operator determined solely by the question pair.
This formulation requires recording outcomes of both questions under both orderings, corresponding to a split-sample survey design~\cite{wang2013quantum}.

The second compares the marginal probability of obtaining $\lambda_k^A$ when $Q_A$ is asked first against when it follows a non-selective measurement of $Q_B$.
When $Q_A$ is asked first, the marginal probability of obtaining $\lambda_k^A$ is $P_A = \mathrm{Tr}(\hat{P}_k^A\rho_i)$.
A non-selective measurement of $Q_B$ corresponds to measuring $Q_B$ and discarding the outcome, updating the state to
\begin{equation}
\tilde{\rho}_i^{(2)} = \sum_{\ell} \hat{P}_{\ell}^B\rho_i\hat{P}_{\ell}^B\,,
\label{eq:nonselective}
\end{equation}
which is a classical mixture of post-measurement states in the eigenbasis of $\hat{\Pi}_B$.
The marginal probability of subsequently obtaining $\lambda_k^A$ is then
\begin{equation}
P_{BA} = \mathrm{Tr}(\hat{P}_k^A\tilde{\rho}_i^{(2)}) = \sum_{\ell} \mathrm{Tr}(\hat{P}_{\ell}^B\hat{P}_k^A\hat{P}_{\ell}^B\rho_i)\,.
\label{eq:PBA2}
\end{equation}
We define the corresponding order-effect asymmetry as
\begin{equation}
\Delta P_i^{(2)} = P_A - P_{BA} = \mathrm{Tr}(\hat{D}^{(2)}\,\rho_i)\,,
\label{eq:DeltaP2}
\end{equation}
where $\hat{D}^{(2)} = \hat{P}_k^A - \sum_{\ell}\hat{P}_{\ell}^B\hat{P}_k^A\hat{P}_{\ell}^B$.
This formulation corresponds more closely to the survey design of Moore~\cite{moore2002measuring}, where the response rate to $Q_A$ is compared across two independent groups: one answering $Q_A$ alone and one answering $Q_B$ before $Q_A$, without conditioning on the response to $Q_B$.
Both $\Delta P_i^{(1)}$ and $\Delta P_i^{(2)}$ vanish when $[\hat{\Pi}_A,\hat{\Pi}_B] = 0$, confirming that order effects only take place for incompatible operators.

\section{Governing equations of opinion evolution}
\label{sec:models}
\subsection{The Lindblad master equation}

We define the raising and lowering operators
\begin{equation}
\hat{\sigma}^+ = \ketbra{0}{1}\,, \quad \hat{\sigma}^- = \ketbra{1}{0}\,,
\label{eq:ladder}
\end{equation}
where $\hat{\sigma}^+$ raises from opinion~B to opinion~A, and $\hat{\sigma}^-$ lowers from opinion~A to opinion~B.
We define the two-body jump operators
\begin{equation}
\hat{L}_{ij}^+ = \hat{\sigma}_i^+\hat{\sigma}_j^-\,, \quad \hat{L}_{ij}^- = \hat{\sigma}_i^-\hat{\sigma}_j^+\,,
\label{eq:jump_social}
\end{equation}
where $\hat{\sigma}_i^\pm$ acts on $\mathcal{H}_i$ and $\hat{\sigma}_j^\mp$ acts on $\mathcal{H}_j$, with the identity on all remaining subsystems.
The operator $\hat{L}_{ij}^+$ drives the joint state from $\ket{1}_i\ket{0}_j$ to $\ket{0}_i\ket{1}_j$, exchanging the opinions of agents~$i$ and~$j$ in analogy with spin-exchange models~\cite{castellano2009statistical}. 
The operator $\hat{L}_{ij}^-$ encodes the reverse exchange.
We consider the Ising Hamiltonian on the full space $\mathcal{H}$,
\begin{equation}
\hat{H} = -\sum_{(i,j)\in E} w_{ij}\,\hat{O}_i\hat{O}_j\,,
\label{eq:hamiltonian}
\end{equation}
with energy expectation $\langle \hat{H} \rangle = -\sum_{(i,j)\in E} w_{ij}\, x_i x_j$, which is minimized when all adjacent pairs $(i,j)$ satisfy $x_ix_j=1$.
For a connected graph, the global minimum corresponds to full consensus at $x_i = 1$ or $x_i = -1$ for all~$i$.

Since social influence is irreversible and coherence is continuously lost through interaction, we model the joint dynamics with the Lindblad master equation~\cite{lindblad1976generators,gorini1976completely}.
The joint state $\rho$ evolves according to
\begin{equation}
\begin{aligned}
\dot{\rho} = -i[\hat{H},\rho] 
+ \sum_{(i,j)\in E}(1-\lambda_i)w_{ij}\Bigl(\mathcal{D}[\hat{L}_{ij}^+][\rho] + \mathcal{D}[\hat{L}_{ij}^-][\rho]\Bigr) 
+ \sum_i \lambda_i \Bigl(\gamma_{i,0}^+\mathcal{D}[\hat{\sigma}_i^+][\rho] + \gamma_{i,0}^-\mathcal{D}[\hat{\sigma}_i^-][\rho]\Bigr),
\end{aligned}
\label{eq:master_full}
\end{equation}
where $\lambda_i \in [0,1]$ is the stubbornness of agent~$i$, $\gamma_{i,0}^+ = (1+s_i)/2$ and $\gamma_{i,0}^- = (1-s_i)/2$ with $s_i \in [-1,1]$, and $\mathcal{D}[\hat{L}][\rho]$ is the Lindblad dissipator, defined as 
\begin{equation}
\mathcal{D}[\hat{L}][\rho] = \hat{L}\rho\hat{L}^\dagger - \tfrac{1}{2}\hat{L}^\dagger\hat{L}\rho - \tfrac{1}{2}\rho\hat{L}^\dagger\hat{L}\,.
\label{eq:dissipator}
\end{equation}
Equation~\eqref{eq:master_full} consists of three parts: a unitary Hamiltonian term encoding coherent social alignment, a pairwise interaction term that nudges agent~$i$ toward neighbor~$j$ via $\hat{L}_{ij}^\pm$, and an anchor term that pulls each agent toward its anchor opinion.

\subsection{Exact dynamics of joint opinion probabilities}

Let $\mathbf{b} = (b_1,\ldots,b_n) \in \{0,1\}^n$ be a binary sequence of length $n$ and let $\ket{\mathbf{b}} = \bigotimes_{i=1}^n\ket{b_i} \in \mathcal{H}$ denote the corresponding computational basis vector.
The density matrix $\rho(t)$ contains two types of information: the diagonal elements $p_\mathbf{b}(t) = \bra{\mathbf{b}}\rho(t)\ket{\mathbf{b}}$, which are directly observable probabilities, and the off-diagonal elements, which encode quantum coherence.
In addition, $\{p_\mathbf{b}(t)\}_{\mathbf{b}\in\{0,1\}^n}$ is a probability distribution over the $2^n$ configurations in the space of $\{0,1\}^n$, and the individual opinion $x_i$ is a linear combination of $p_\mathbf{b}$ via
\begin{equation}
x_i(t) = \sum_{\mathbf{b}:\,b_i=0} p_\mathbf{b}(t) - \sum_{\mathbf{b}:\,b_i=1} p_\mathbf{b}(t)\,.
\label{eq:xi_from_p}
\end{equation}

We show that the diagonal elements $p_\mathbf{b}$ satisfy a closed, exact equation of motion that is entirely decoupled from the off-diagonal elements.
In the Lindblad framework, the diagonal and off-diagonal elements of a density matrix are coupled through the Hamiltonian. 
Here, the Ising Hamiltonian~\eqref{eq:hamiltonian} is diagonal in the computational basis, which causes the coherent term to vanish from the diagonal dynamics, leaving a self-contained classical system.

\begin{prop}\label{prop:markov}
The joint probabilities $\{p_\mathbf{b}(t)\}_{\mathbf{b}\in\{0,1\}^n}$ satisfy a closed linear ODE system that is independent of all off-diagonal elements of $\rho(t)$.
Specifically, they evolve according to
\begin{equation}
\dot{p}_\mathbf{b}(t) = \sum_{\mathbf{c}\neq\mathbf{b}}
\Bigl(T_{\mathbf{b}\mathbf{c}}\,p_\mathbf{c}(t)
- T_{\mathbf{c}\mathbf{b}}\,p_\mathbf{b}(t)\Bigr)\,,
\label{eq:markov_chain}
\end{equation}
where 
\begin{equation}
T_{\mathbf{b}\mathbf{c}} = \sum_{\hat{L}\in\mathcal{L}}
\gamma_{\hat{L}}\,|\bra{\mathbf{b}}\hat{L}\ket{\mathbf{c}}|^2\,,
\label{eq:transition_rates}
\end{equation}
and $\mathcal{L}$ denotes the full set of jump operators appearing in~\eqref{eq:master_full} with their associated rates $\gamma_{\hat{L}}$.
\end{prop}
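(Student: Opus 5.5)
We take the diagonal matrix element $\bra{\mathbf{b}}\cdot\ket{\mathbf{b}}$ of both sides of the master equation~\eqref{eq:master_full} and treat the Hamiltonian term and each dissipator separately. Since $p_\mathbf{b}=\bra{\mathbf{b}}\rho\ket{\mathbf{b}}$ and the equation is linear, $\dot p_\mathbf{b}$ is a sum of contributions, one per term. All jump operators in $\mathcal{L}$ (namely $\hat{L}_{ij}^\pm$ with rate $(1-\lambda_i)w_{ij}$, and $\hat\sigma_i^\pm$ with rate $\lambda_i\gamma_{i,0}^\pm$) are tensor products of $\hat\sigma^\pm$ and identities. Each of them therefore maps every computational basis vector to either zero or exactly one other basis vector. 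This ``monomial'' structure is what we will exploit.

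\emph{Hamiltonian term.} $\hat H$ is a linear combination of products of $\hat O_i=\mathrm{diag}(1,-1)$ factors, so $\hat H\ket{\mathbf{b}}=E_\mathbf{b}\ket{\mathbf{b}}$ with $E_\mathbf{b}$ real. Hence
\[
\bra{\mathbf{b}}[\hat H,\rho]\ket{\mathbf{b}}=E_\mathbf{b}p_\mathbf{b}-p_\mathbf{b}E_\mathbf{b}=0,
\]
and the unitary part drops out of the diagonal dynamics.

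\emph{Dissipators.} Fix $\hat L\in\mathcal{L}$. Write $\hat L\ket{\mathbf{c}}=\ell_\mathbf{c}\ket{f(\mathbf{c})}$, where $\ell_\mathbf{c}\in\{0,1\}$ and $f$ is injective on the set where $\ell_\mathbf{c}\neq 0$. Injectivity holds because $\hat\sigma^\pm$ are partial isometries between basis states, and the tensor product preserves this.

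For the gain term, insert resolutions of the identity:
\[
\bra{\mathbf{b}}\hat L\rho\hat L^\dagger\ket{\mathbf{b}}=\sum_{\mathbf{c},\mathbf{c}'}\bra{\mathbf{b}}\hat L\ket{\mathbf{c}}\rho_{\mathbf{c}\mathbf{c}'}\bra{\mathbf{c}'}\hat L^\dagger\ket{\mathbf{b}}.
\]
By injectivity, at most one $\mathbf{c}$ has $\bra{\mathbf{b}}\hat L\ket{\mathbf{c}}\ne0$, so $\mathbf{c}=\mathbf{c}'$. The sum therefore reduces to $\sum_\mathbf{c}|\bra{\mathbf{b}}\hat L\ket{\mathbf{c}}|^2p_\mathbf{c}$, and no off-diagonal entries survive.

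For the loss terms, $\hat L^\dagger\hat L$ is diagonal in the computational basis, equal to $\sum_\mathbf{c}|\ell_\mathbf{c}|^2\ketbra{\mathbf{c}}{\mathbf{c}}$. Hence
\[
\bra{\mathbf{b}}\bigl(\tfrac12\hat L^\dagger\hat L\rho+\tfrac12\rho\hat L^\dagger\hat L\bigr)\ket{\mathbf{b}}=\|\hat L\ket{\mathbf{b}}\|^2p_\mathbf{b}=\sum_\mathbf{c}|\bra{\mathbf{c}}\hat L\ket{\mathbf{b}}|^2p_\mathbf{b}.
\]

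Multiplying by $\gamma_{\hat L}$ and summing over $\mathcal{L}$ gives
\[
\dot p_\mathbf{b}=\sum_\mathbf{c}\bigl(T_{\mathbf{b}\mathbf{c}}p_\mathbf{c}-T_{\mathbf{c}\mathbf{b}}p_\mathbf{b}\bigr),
\]
with $T$ as in~\eqref{eq:transition_rates}. The $\mathbf{c}=\mathbf{b}$ terms cancel between gain and loss, so the sum may be restricted to $\mathbf{c}\neq\mathbf{b}$, which yields~\eqref{eq:markov_chain}. (In fact no jump operator here has a nonzero diagonal element, but cancellation suffices.) The resulting system is closed and linear in $\{p_\mathbf{b}\}$ and involves no off-diagonal entries of $\rho$.

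The only nontrivial step is the vanishing of the cross terms $\rho_{\mathbf{c}\mathbf{c}'}$ with $\mathbf{c}\neq\mathbf{c}'$ in the gain term. This rests on each jump operator sending distinct basis states to distinct basis states or to zero. I would verify this explicitly for $\hat\sigma_i^+\hat\sigma_j^-$: it acts nontrivially only on configurations with $b_i=1$, $b_j=0$, flips both bits, and so is injective there. The analogous check for single-site $\hat\sigma_i^\pm$ is immediate.
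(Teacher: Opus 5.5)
Your proposal is correct and follows essentially the same route as the paper. Both arguments take the $\mathbf{b}$-diagonal element, drop the Hamiltonian term because $\hat H$ is diagonal, and use the fact that each jump operator sends a basis state to zero or to a single basis state with at most one preimage; this collapses the gain double sum onto a diagonal entry and makes $\hat L^\dagger\hat L$ diagonal. Your injectivity argument for the diagonality of $\hat L^\dagger\hat L$ and your cancellation of the $\mathbf{c}=\mathbf{b}$ terms are slightly more explicit than the paper's, but the argument is the same.
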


\begin{proof}
Taking $\bra{\mathbf{b}}\cdot\ket{\mathbf{b}}$ of~\eqref{eq:master_full} and using $p_\mathbf{b} = \bra{\mathbf{b}}\rho\ket{\mathbf{b}}$, we obtain
\begin{equation}
\dot{p}_\mathbf{b}
= -i\bra{\mathbf{b}}[\hat{H},\rho]\ket{\mathbf{b}}
+ \sum_{\hat{L}\in\mathcal{L}}\gamma_{\hat{L}}\,
\bra{\mathbf{b}}\mathcal{D}[\hat{L}][\rho]\ket{\mathbf{b}}\,.
\label{eq:dpdt_raw}
\end{equation}
We treat each term separately.

\textit{Hamiltonian term.}
Since $\hat{H} = -\sum_{(i,j)\in E}w_{ij}\hat{O}_i\hat{O}_j$
is diagonal in the computational basis with eigenvalues
$E_\mathbf{b} = -\sum_{(i,j)\in E}w_{ij}(1-2b_i)(1-2b_j)$,
we have $\hat{H}\ket{\mathbf{b}} = E_\mathbf{b}\ket{\mathbf{b}}$ and therefore
\begin{equation}
\bra{\mathbf{b}}[\hat{H},\rho]\ket{\mathbf{b}}
= \bra{\mathbf{b}}\hat{H}\rho\ket{\mathbf{b}}
- \bra{\mathbf{b}}\rho\hat{H}\ket{\mathbf{b}}
= E_\mathbf{b}\,p_\mathbf{b} - E_\mathbf{b}\,p_\mathbf{b} = 0\,.
\end{equation}

\textit{Dissipator term.}
Fix a jump operator $\hat{L}\in\mathcal{L}$ with rate $\gamma_{\hat{L}}$.
Each $\hat{L}$ maps every computational basis state to either zero or exactly one other state, determined by the jump operator, which means $\hat{\sigma}_i^+\ket{\mathbf{c}} = \ket{\mathbf{c}^{(i)}}$ if $c_i=1$ and $0$ otherwise; $\hat{L}_{ij}^+\ket{\mathbf{c}} = \ket{\mathbf{c}^{(ij)}}$ if $c_i=1,c_j=0$ and $0$ otherwise; and symmetrically for $\hat{\sigma}_i^-$ and $\hat{L}_{ij}^-$.
In particular, we have
\begin{equation}
\bra{\mathbf{b}}\hat{L}\ket{\mathbf{c}} \in \{0,1\}
\quad \text{for all } \mathbf{b},\mathbf{c}\in\{0,1\}^n\,,
\label{eq:binary_elements}
\end{equation}
and for each $\mathbf{b}$ there is at most one preimage $\mathbf{c}(\mathbf{b})$ with $\bra{\mathbf{b}}\hat{L}\ket{\mathbf{c}(\mathbf{b})} = 1$.

We expand $\bra{\mathbf{b}}\mathcal{D}[\hat{L}][\rho]\ket{\mathbf{b}}$ using $\hat{I} = \sum_{\mathbf{c}}\ket{\mathbf{c}}\bra{\mathbf{c}}$.
Then the first term in the dissipator yields
\begin{align}
\bra{\mathbf{b}}\hat{L}\rho\hat{L}^\dagger\ket{\mathbf{b}}
= \sum_{\mathbf{c},\mathbf{c}'}
\bra{\mathbf{b}}\hat{L}\ket{\mathbf{c}}\,
\bra{\mathbf{c}}\rho\ket{\mathbf{c}'}\,
\bra{\mathbf{c}'}\hat{L}^\dagger\ket{\mathbf{b}} \notag = \sum_{\mathbf{c},\mathbf{c}'}
\bra{\mathbf{b}}\hat{L}\ket{\mathbf{c}}\,
\overline{\bra{\mathbf{b}}\hat{L}\ket{\mathbf{c}'}}\,
\rho_{\mathbf{c}\mathbf{c}'}\,.
\label{eq:gain_expand}
\end{align}
By~\eqref{eq:binary_elements}, both
$\bra{\mathbf{b}}\hat{L}\ket{\mathbf{c}}$ and
$\overline{\bra{\mathbf{b}}\hat{L}\ket{\mathbf{c}'}}$
are nonzero only when
$\mathbf{c} = \mathbf{c}' = \mathbf{c}(\mathbf{b})$,
so the double sum collapses to a single diagonal term, which is
\begin{equation}
\bra{\mathbf{b}}\hat{L}\rho\hat{L}^\dagger\ket{\mathbf{b}}
= \rho_{\mathbf{c}(\mathbf{b})\mathbf{c}(\mathbf{b})}
= p_{\mathbf{c}(\mathbf{b})}\,.
\label{eq:gain_final}
\end{equation}
For the loss term, we have
\begin{equation}
\bra{\mathbf{b}}\hat{L}^\dagger\hat{L}\rho\ket{\mathbf{b}} = \sum_{\mathbf{c}}\bra{\mathbf{b}}\hat{L}^\dagger\hat{L}\ket{\mathbf{c}}\bra{\mathbf{c}}\rho\ket{\mathbf{b}} = \sum_{\mathbf{c}}\bra{\mathbf{b}}\hat{L}^\dagger\hat{L}\ket{\mathbf{c}} \rho_{\mathbf{c}\mathbf{b}}\,.
\end{equation}
Since $\hat{L}^\dagger\hat{L}$ is diagonal in the computational basis, $\bra{\mathbf{b}}\hat{L}^\dagger\hat{L}\ket{\mathbf{c}} = 0$ for all $\mathbf{b} \neq \mathbf{c}$.
Therefore, the anticommutator term reads
\begin{equation}
-\tfrac{1}{2}\bra{\mathbf{b}}\{\hat{L}^\dagger\hat{L},\rho\}\ket{\mathbf{b}} = -\bra{\mathbf{b}}\hat{L}^\dagger\hat{L}\ket{\mathbf{b}}\,p_\mathbf{b}\,.
\label{eq:loss_final}
\end{equation}
We combine~\eqref{eq:gain_final} and~\eqref{eq:loss_final}. It follows that $\bra{\mathbf{b}}\mathcal{D}[\hat{L}][\rho]\ket{\mathbf{b}}$ depends only on the diagonal elements of $\rho$.
We sum over all $\hat{L}\in\mathcal{L}$ and substitute into~\eqref{eq:dpdt_raw} to obtain
\begin{equation}
\dot{p}_\mathbf{b} =  \sum_{\hat{L}\in\mathcal{L}}\gamma_{\hat{L}}
\Bigl(p_{\mathbf{c}(\mathbf{b})} - \bra{\mathbf{b}}\hat{L}^\dagger\hat{L}\ket{\mathbf{b}}\,p_\mathbf{b}\Bigr)\,.
\label{eq:dpdt_final}
\end{equation}
The gain term $\sum_{\hat{L}}\gamma_{\hat{L}}\,p_{\mathbf{c}(\mathbf{b})}$ counts all jump operators that map some configuration $\mathbf{c}$ to $\mathbf{b}$, weighted by their rates, which is precisely $\sum_{\mathbf{c}\neq\mathbf{b}}T_{\mathbf{bc}}\,p_\mathbf{c}$ with $T_{\mathbf{bc}}$ as defined in~\eqref{eq:transition_rates}.
For the loss term, we use the identity
$\bra{\mathbf{b}}\hat{L}^\dagger\hat{L}\ket{\mathbf{b}} = \|\hat{L}\ket{\mathbf{b}}\|^2$.
By the structure of the jump operators, $\hat{L}\ket{\mathbf{b}}$ is either $0$ or a single computational basis state $\ket{\mathbf{c}(\mathbf{b})}$.
If $\hat{L}\ket{\mathbf{b}} = 0$, then $\hat{L}$ contributes nothing to the loss.
If $\hat{L}\ket{\mathbf{b}} = \ket{\mathbf{c}(\mathbf{b})}\neq 0$, then by orthonormality of the computational basis, $\bra{\mathbf{c}}\hat{L}\ket{\mathbf{b}} = 1$ only if $\mathbf{c}=\mathbf{c}(\mathbf{b})$.
Therefore $\bra{\mathbf{b}}\hat{L}^\dagger\hat{L}\ket{\mathbf{b}} \in \{0,1\}$, and summing over all $\hat{L}\in\mathcal{L}$ gives
\begin{equation}
\sum_{\hat{L}\in\mathcal{L}}\gamma_{\hat{L}}\bra{\mathbf{b}}\hat{L}^\dagger\hat{L}\ket{\mathbf{b}}
= \sum_{\hat{L}:\,\hat{L}\ket{\mathbf{b}}\neq 0}\gamma_{\hat{L}}
= \sum_{\mathbf{c}\neq\mathbf{b}}T_{\mathbf{cb}}\,.
\label{eq:loss_sum}
\end{equation}
We combine~\eqref{eq:loss_final} and~\eqref{eq:loss_sum} and obtain $\sum_{\mathbf{c}\neq\mathbf{b}}T_{\mathbf{cb}}\,p_\mathbf{b}$ for the loss term, completing the identification with~\eqref{eq:markov_chain}.
\end{proof}

\subsection{Markov chain and opinion correlations}

Note that Eq.~\eqref{eq:markov_chain} defines a continuous-time Markov chain on $\{0,1\}^n$ with transition rates $T_{\mathbf{bc}}$ given in~\eqref{eq:transition_rates}.
The off-diagonal entries of the generator $M \in \mathbb{R}^{2^n\times 2^n}$ are $M_{\mathbf{bc}} = T_{\mathbf{bc}} \geq 0$ for $\mathbf{b}\neq\mathbf{c}$, which follows directly from~\eqref{eq:transition_rates}.
The diagonal entries are $M_{\mathbf{bb}} = -\sum_{\mathbf{c}\neq\mathbf{b}}T_{\mathbf{cb}}$, so each column sums to zero by construction.
Therefore $M$ is a valid Markov generator, and $\{p_\mathbf{b}(t)\}$ remains a probability distribution for all $t \geq 0$.

The nonzero transition rates arise from two sources.
For each agent~$i$, the operators $\hat{\sigma}_i^\pm$ induce single-bit flips between $\mathbf{b}$ and $\mathbf{b}^{(i)}$, where $\mathbf{b}^{(i)}$ denotes $\mathbf{b}$ with bit $i$ flipped, yielding anchor transitions
\begin{equation}
T_{\mathbf{b}^{(i)},\,\mathbf{b}} = \begin{cases} \lambda_i\,\gamma_{i,0}^+ & \text{if } b_i = 1 \\ \lambda_i\,\gamma_{i,0}^- & \text{if } b_i = 0 \end{cases}
\label{eq:anchor_transitions}
\end{equation}
with $\gamma_{i,0}^\pm = (1\pm s_i)/2$.
For each directed edge $(i,j)\in E$, the operator $\hat{L}_{ij}^+$ induces a simultaneous two-bit flip, where agent~$i$ moves from opinion~B to opinion~A while agent~$j$ moves from opinion~A to opinion~B, yielding social influence transitions
\begin{equation}
T_{\mathbf{b}^{(ij)},\,\mathbf{b}} = (1-\lambda_i)\,w_{ij}\cdot\mathbf{1}[b_i=1,\,b_j=0]\,,
\label{eq:social_transitions}
\end{equation}
and symmetrically for $\hat{L}_{ij}^-$.
For $\lambda_i > 0$ and $s_i \in (-1,+1)$ for all~$i$, the anchor rates are strictly positive, making the chain irreducible on $\{0,1\}^n$ and guaranteeing convergence to a unique stationary distribution $\mathbf{p}_\infty$ satisfying $M\mathbf{p}_\infty = 0$.

Since $\{p_\mathbf{b}(t)\}$ is a probability distribution over $\{0,1\}^n$, all joint moments of agent opinions are well defined.
We identify the opinion of agent~$i$ as the random variable $X_i = 1 - 2b_i \in \{-1,+1\}$, so that the expected opinion is
\begin{equation}
x_i(t) = \mathbb{E}[X_i] = \sum_{\mathbf{b}\in\{0,1\}^n}(1-2b_i)\,p_\mathbf{b}(t) = \mathrm{Tr}(\rho\,\hat{O}_i)\,.
\label{eq:xi_from_markov}
\end{equation}
We define the opinion covariance between agents~$i$ and~$j$ as
\begin{equation}
\mathrm{Cov}(X_i, X_j) = \mathbb{E}[X_i X_j] - \mathbb{E}[X_i]\mathbb{E}[X_j] = \mathrm{Tr}(\rho\,\hat{O}_i\hat{O}_j) - x_i x_j\,,
\label{eq:covariance}
\end{equation}
where the joint expectation $\mathbb{E}[X_i X_j] = \sum_{\mathbf{b}}(1-2b_i)(1-2b_j)\,p_\mathbf{b} = \mathrm{Tr}(\rho\,\hat{O}_i\hat{O}_j)$ is a two-body quantum observable accessible from the full joint state $\rho$.
The covariance~\eqref{eq:covariance} measures the statistical dependence between agents' opinion configurations beyond what individual opinions $x_i$ capture, and $\mathrm{Cov}(X_i,X_j)$ vanishes under the product state approximation~\eqref{eq:product_ansatz}, since $\mathrm{Tr}(\bigotimes_k\rho_k\,\hat{O}_i\hat{O}_j) = x_i x_j$. The two-body social transitions~\eqref{eq:social_transitions} generate nonzero covariance by coupling connected agents' opinion configurations, building correlations that persist at steady state.

\subsection{Reduced dynamics under the product state approximation}\label{sec:product}

For a system of $n$ agents, the joint state $\rho$ lives in a Hilbert space of dimension $2^n$, and the master equation~\eqref{eq:master_full} governs the evolution of a density matrix with $4^n$ real parameters.
This exponential scaling makes the full joint dynamics intractable for large $n$, motivating the product state approximation
\begin{equation}
\rho \approx \bigotimes_{k=1}^n \rho_k\,,
\label{eq:product_ansatz}
\end{equation}
which assumes that the joint state factorizes into independent single-agent states at all times.
Under~\eqref{eq:product_ansatz}, the two-body term $\mathrm{Tr}_{\neq i,j}(\rho) \approx \rho_i \otimes \rho_j$ neglects inter-agent correlations, reducing the problem to $n$ coupled equations for $\rho_i \in \mathcal{D}(\mathcal{H}_i)$.

We take the partial trace $\mathrm{Tr}_{\neq i}$ of~\eqref{eq:master_full} under~\eqref{eq:product_ansatz}. The resulting reduced master equation for agent~$i$ (derived in Appendix~\ref{app:marginal}) is
\begin{equation}
\begin{aligned}
\dot{\rho}_i = &-i\bigl[\hat{H}_i^{\mathrm{eff}},\rho_i\bigr] + \sum_{j}(1-\lambda_i)w_{ij}\Bigl(\tfrac{1+x_j}{2}\,\mathcal{D}[\hat{\sigma}_i^+][\rho_i] + \tfrac{1-x_j}{2}\,\mathcal{D}[\hat{\sigma}_i^-][\rho_i]\Bigr) \\
&+ \lambda_i\Bigl(\gamma_{i,0}^+\,\mathcal{D}[\hat{\sigma}_i^+][\rho_i] + \gamma_{i,0}^-\,\mathcal{D}[\hat{\sigma}_i^-][\rho_i]\Bigr)\,,
\end{aligned}
\label{eq:master_marginal}
\end{equation}
where the effective single-site Hamiltonian
\begin{equation}
\hat{H}_i^{\mathrm{eff}} = -\Omega_i\,\hat{O}_i\,, \quad \Omega_i = \sum_j w_{ij}\,x_j\,,
\label{eq:Heff}
\end{equation}
arises from the partial trace of~\eqref{eq:hamiltonian}, with the two-body coupling $\hat{O}_i\hat{O}_j$ reducing to a local field $-\Omega_i\hat{O}_i$ weighted by the influence-weighted mean of neighbors' expressed opinions.
The coefficients $(1+x_j)/2$ and $(1-x_j)/2$ are the Born-rule probabilities that agent~$j$ holds opinion~A and opinion~B respectively, since $\mathrm{Tr}(\rho_j\hat{\sigma}_j^+\hat{\sigma}_j^-) = (1+x_j)/2$ and $\mathrm{Tr}(\rho_j\hat{\sigma}_j^-\hat{\sigma}_j^+) = (1-x_j)/2$.
These coefficients encode the social influence of neighbor~$j$ on agent~$i$ entirely through $x_j$, the expressed opinion of~$j$.

We apply $\mathrm{Tr}(\cdot\,\hat{O})$ to~\eqref{eq:master_marginal} and obtain the governing equations for opinions
\begin{align}\label{eq:xi} 
\dot{x}_i = \lambda_i(s_i - x_i) + (1-\lambda_i)\sum_j w_{ij}(x_j - x_i)\,, 
\end{align}
which is exactly the continuous-time FJ model~\cite{friedkin1990social}.

We note that~\eqref{eq:product_ansatz} is not a separable state approximation in the quantum mechanical sense, which allows convex mixtures of product states $\rho = \sum_k p_k \bigotimes_i \rho_i^{(k)}$.
Since the master equation~\eqref{eq:master_full} is linear in $\rho$, the reduced dynamics~\eqref{eq:master_marginal} extend immediately to separable states by linearity. The governing equations of opinions~\eqref{eq:xi} hold for each pure product component $\bigotimes_i \rho_i^{(k)}$, and the full separable state evolves as the same convex mixture of the resulting trajectories.

\section{Numerical experiments}
\label{sec:numerics}

\subsection{Exact dynamics on small networks}

Tracking the full joint density matrix $\rho(t) \in \mathcal{D}(\mathcal{H})$ with $\dim\mathcal{H} = 2^n$ is infeasible for large $n$.
We simulate the full Lindblad equation~\eqref{eq:master_full} on small networks with $n = 6$ agents on four undirected networks: a complete graph, a chain graph, a ring graph, and a Barbell graph (two cliques of three agents connected by a single bridge), where the graphs are illustrated in Fig.~\ref{fig:networks}.
The influence matrix $W$ is row-normalized in each case.
\begin{figure}[htp]
\centering
\includegraphics[width=0.75\linewidth]{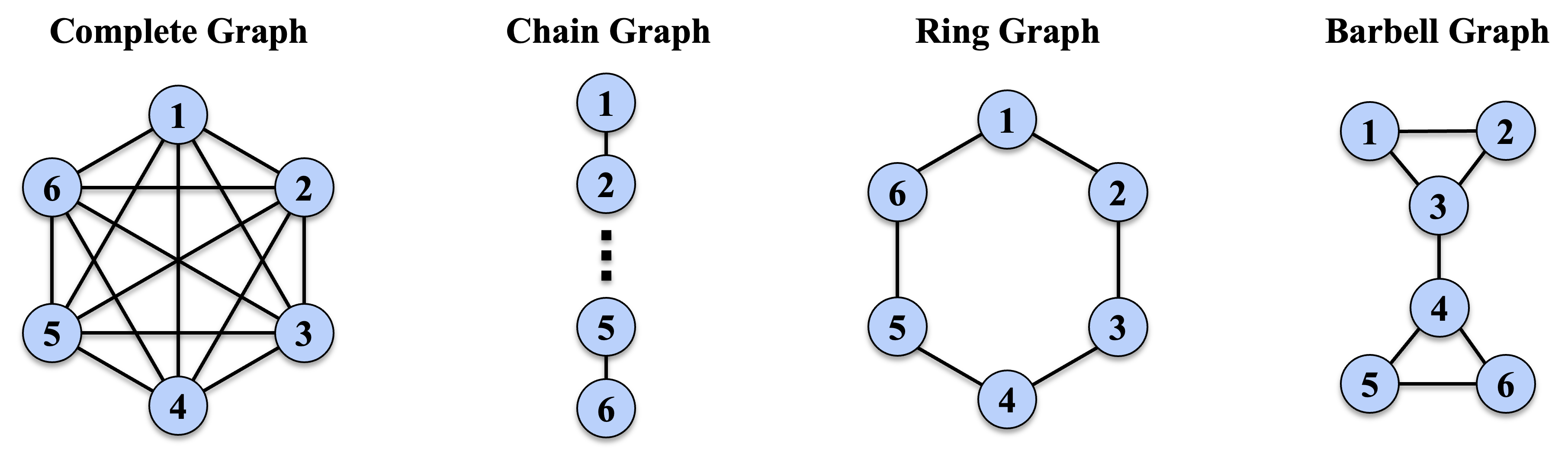}
\caption{The diagram of the four networks with 6 nodes and their indices.}
\label{fig:networks}
\end{figure}

We choose stubbornness $\lambda_i = 0.1$ for all agents and anchor opinions $s_i=0.4i-1.4$, which is uniformly spaced in $[-1, 1]$.
The initial state is a product state $\rho(0) = \bigotimes_i \rho_i(0)$, with each $\rho_i(0)$ constructed from the Bloch vector with $x_i(0) = s_i$, $c_i(0) = \sqrt{1 - s_i^2}$, and $\phi_i(0) \sim \mathcal{U}(0, 2\pi)$.
We use a fourth-order Runge--Kutta method for time integration with step size $\Delta t = 10^{-3}$.

\subsubsection{Comparison of quantum and classical models}
We simulate the exact Lindblad equation~\eqref{eq:master_full}, extract $x_i(t) = \mathrm{Tr}(\rho_i(t)\hat{O})$ from the marginal density matrices, and compare the resulting trajectories to the classical FJ model~\eqref{eq:xi} in Fig.~\ref{fig:opinions}. 
The two models follow similar trends and converge at comparable rates, but do not agree exactly, with discrepancies that depend on the network.
In the quantum model, the two-body jump operators $\hat{L}_{ij}^\pm$ generate inter-agent correlations that feed back into the marginal dynamics, causing the exact trajectories to deviate from the classical FJ model.
At steady state, the exact quantum model exhibits smaller opinion variance across agents than the classical model on all four networks.
Inter-agent correlations generated by the social jump operators act as an additional consensus-promoting mechanism, pulling opinions closer together than the product state approximation predicts.
\begin{figure}[htp]
\centering
\includegraphics[trim={20 10 90 10},clip,width=\linewidth]{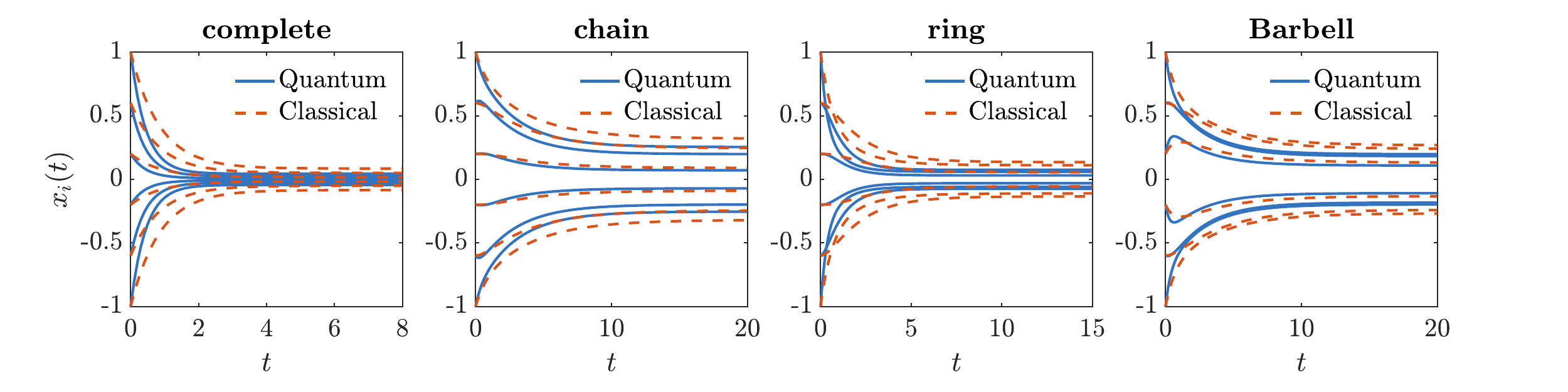}
\caption{Opinion trajectories $x_i(t)$ from the quantum opinion model \eqref{eq:master_full} and the classical FJ model \eqref{eq:xi} on different networks. All simulations have the same initial states so that the initial opinions are uniformly spaced on $[-1,1]$.}
\label{fig:opinions}
\end{figure}

We further examine the quantum model by tracking three quantities that have no direct classical counterparts (except for the energy).
The first is the Frobenius distance $\|\rho(t) - \rho_{\mathrm{prod}}(t)\|_F$, which measures the total deviation of the joint state from the product state approximation, where
\begin{equation}
\rho_{\mathrm{prod}}(t) = \bigotimes_i \rho_i(t)\,, \quad \rho_i(t) = \mathrm{Tr}_{\neq i}(\rho(t))\,.
\end{equation}
The second is the energy $\langle\hat{H}\rangle(t) = -\sum_{(i,j)\in E} w_{ij} x_i(t) x_j(t)$, which measures the degree of opinion alignment within the network.
The third is the quantum mutual information between two agent groups $C_1 = \{1,2,3\}$ and $C_2 = \{4,5,6\}$,
\begin{equation}
I(C_1 : C_2)(t) = S(\rho_{C_1}(t)) + S(\rho_{C_2}(t)) - S(\rho(t))\,,
\label{eq:mutual_info}
\end{equation}
where $S(\rho) = -\mathrm{Tr}(\rho\log\rho)$ is the von Neumann entropy and $\rho_{C_k}(t) = \mathrm{Tr}_{\neq C_k}(\rho(t))$ is the marginal state of group $C_k$.
The quantum mutual information quantifies the total correlations between the two groups~\cite{nielsen2010quantum}, capturing both classical statistical dependence and quantum coherences shared across the partition.
Fig.~\ref{fig:combined} shows their time trajectories.
\begin{figure}[htp]
\centering
\includegraphics[trim={0 0 50 0},clip,width=0.95\linewidth]{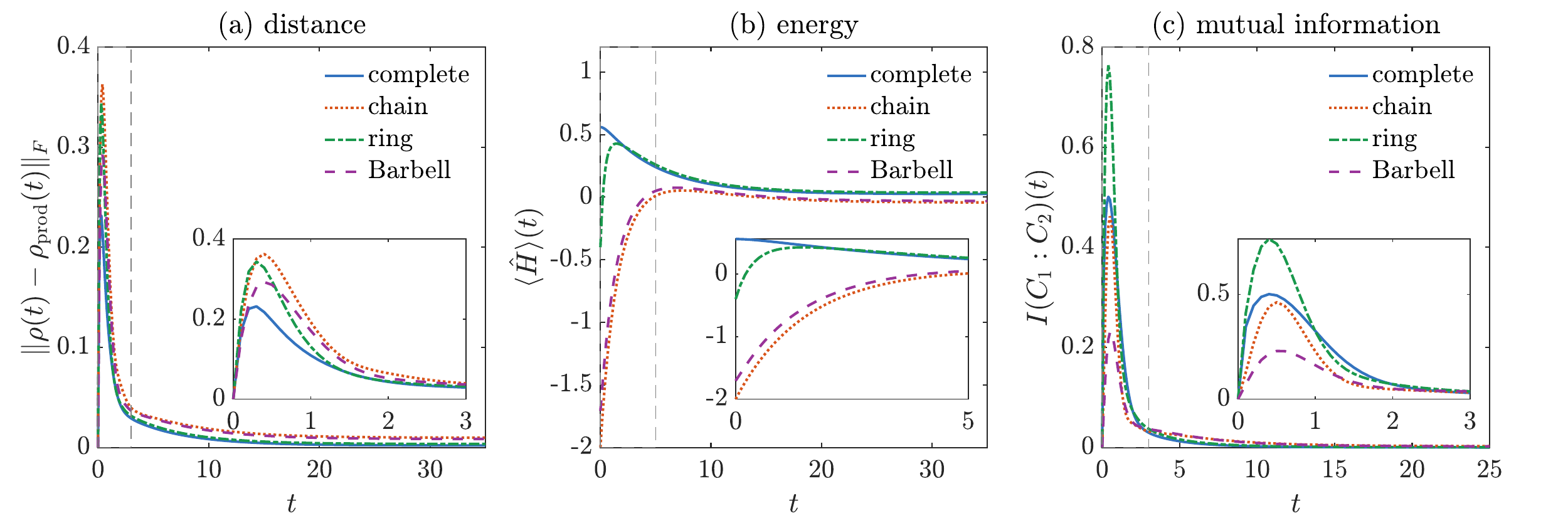}
\caption{Frobenius distance, quantum mutual information $I(C_1:C_2)(t)$, and energy $\langle\hat{H}\rangle(t)$ over time for each network.}
\label{fig:combined}
\end{figure}

The Frobenius distance starts at zero due to the product state initial condition and grows as the social jump operators generate inter-agent correlations.
It saturates at a network-dependent nonzero steady-state value, indicating that the two-body jump operators $\hat{L}_{ij}^{\pm}$ generate persistent correlations not captured by the product state approximation.
The energy~\eqref{eq:hamiltonian} is minimized at full consensus $x_i = \pm 1$ for all $i$.
For the ring and Barbell networks, the energy increases visibly before settling toward its steady state.
Initially, the anchor opinions are spread across $[-1,1]$, so only the neighboring pair $(3,4)$ contributes a positive term $-w_{34}x_3x_4$ to the energy while all other pairs contribute negative terms, resulting in a relatively small initial energy.
As opinions evolve toward the center, the products $x_ix_j$ decrease in magnitude, raising the energy before social influence eventually drives agents toward alignment.
The quantum mutual information grows initially as the two groups become correlated through social interactions and saturates at a network-dependent value.
The Barbell network yields the least inter-group correlation across all four networks.
Since $C_1$ and $C_2$ correspond to the two cliques connected by a single bridge edge, inter-group information flow is severely limited, suppressing the build-up of correlations across the partition.
For the complete and ring networks, the partition $C_1 = \{1,2,3\}$, $C_2 = \{4,5,6\}$ imposes no community structure, so correlations build up strongly across the partition.

\subsubsection{Pairwise opinion covariance}
We simulate the exact Markov chain~\eqref{eq:markov_chain} governing the joint opinion probabilities $\{p_\mathbf{b}(t)\}$ and compute the opinion covariance $\mathrm{Cov}(X_i, X_j)$ defined in~\eqref{eq:covariance}.
We initialize the system in a single-agent contagion scenario with $p_\mathbf{b}(0) = \delta_{\mathbf{b}, \mathbf{e}_1}$, where $\mathbf{e}_1 = (1,0,\ldots,0)$ denotes the configuration in which only agent~$1$ holds opinion~B, corresponding to $x_1(0) = -1$ and $x_i(0) = 1$ for $i \geq 2$.
We set $\lambda_i = 0$ for all agents and plot the time-dependent covariance between agent~$1$ and all other agents in Fig.~\ref{fig:covariance}.
Agents with symmetric roles in the network have identical covariance curves.
For agents directly connected to agent~$1$, such as agent~$2$ in the chain graph and agents~$2$ and~$6$ in the ring graph, the covariance drops rapidly at early times.
This occurs because the social jump operators $\hat{L}_{1j}^\pm$ transfer opinion~B directly from agent~$1$ to its neighbors, creating configurations in which $X_1 = 1$ and $X_j = 1$, which drives $\mathbb{E}[X_1 X_j]$ negative immediately.
Agents not directly connected to agent~$1$ receive opinion~B via multi-node paths, so their covariance with agent~$1$ decreases more slowly and peaks later in time.
\begin{figure}[htp]
\centering
\includegraphics[trim={0 10 30 10},clip,width=\linewidth]{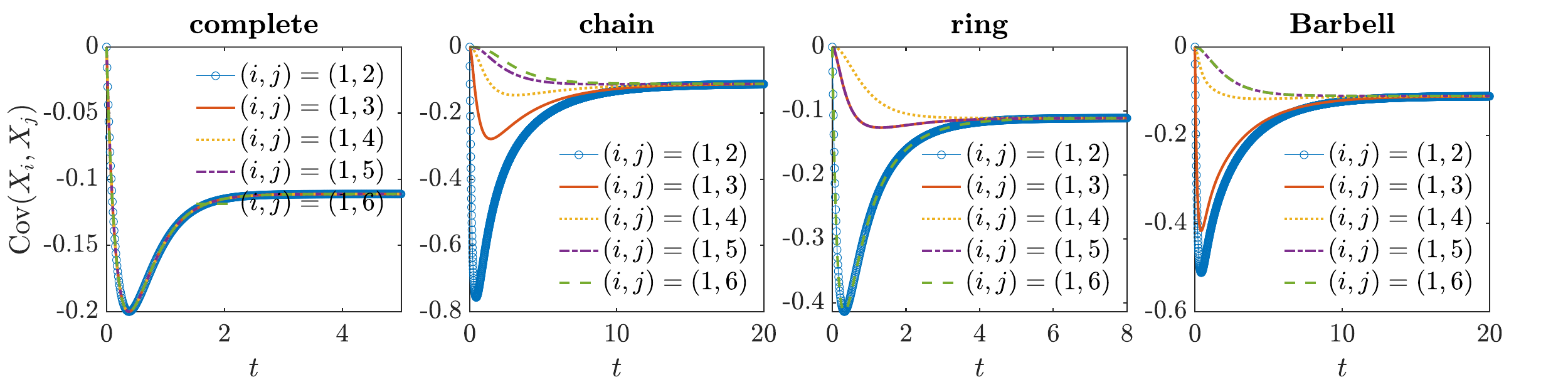}
\caption{Pairwise opinion covariance $\mathrm{Cov}(X_i, X_j)(t)$ between agent 1 and other agents. }
\label{fig:covariance}
\end{figure}

In addition, the covariance trajectories differ across networks during the transient but converge to the same steady-state value.
This is a direct consequence of the conservation of total opinion mass $\sum_i b_i$ when $\lambda_i = 0$: the dynamics are confined to the subspace $\mathcal{S}_1 = \{\mathbf{b} \in \{0,1\}^n : \sum_i b_i = 1\}$, within which the stationary distribution is uniform over all $n$ configurations by the exchange symmetry of $\hat{L}_{ij}^\pm$ on undirected graphs. The steady-state covariance is therefore network-independent.
For the uniform distribution over $\mathcal{S}_1$, $p_\infty(\mathbf{e}_k) = 1/n$ for all $k$, direct computations yield
\begin{equation}
\begin{aligned}
\mathbb{E}_\infty[X_i] &= \sum_{k=1}^n \frac{1}{n}(1 - 2\delta_{ik}) = \frac{n-2}{n}\,, \\
\mathbb{E}_\infty[X_i X_j] &= \sum_{k=1}^n \frac{1}{n}(1-2\delta_{ik})(1-2\delta_{jk}) = \frac{n-4}{n}\,,
\label{eq:cov_steady}
\end{aligned}
\end{equation}
for $i \neq j$ and therefore 
\begin{equation}
\mathrm{Cov}_\infty(X_i, X_j) = \frac{n-4}{n} - \left(\frac{n-2}{n}\right)^2 = -\frac{4}{n^2}\,,
\end{equation}
which evaluates to $-1/9 \approx -0.111$ for $n = 6$, consistent with the numerical results in Fig.~\ref{fig:covariance}.

\subsection{Reduced dynamics on a Facebook network}

We simulate the quantum opinion model under the product state approximation~\eqref{eq:master_marginal} on a large real-world network.
We use the largest connected component of the Caltech network from the Facebook-100 dataset~\cite{traud2012social,red2011comparing}, where nodes represent individuals and edges encode Facebook friendships recorded on a single day in the fall of 2005, giving $n = 769$ agents.
The influence matrix $W$ is obtained by row-normalizing the binary adjacency matrix.

We partition agents into $K$ communities via spectral clustering on the normalized Laplacian and assign anchor opinions community-wise. 
Community means are $\mu_k = -1 + 2k/(K+1)$ for $k = 1,\ldots,K$, and the anchor of each agent in community $k$ is drawn as $s_i \sim \mathcal{N}(\mu_k, 0.16)$, clipped to $[-1,+1]$.
We initialize opinions at the anchor values, $x_i(0) = s_i$, following the original FJ formulation~\cite{friedkin1990social}.
Stubbornness parameters are drawn independently as $\lambda_i \sim \mathcal{U}(0.1, 0.4)$.
For the quantum initial conditions, we set $c_i(0) = 1$ for all agents, corresponding to maximally coherent states that represent maximum cognitive ambivalence prior to any social interaction, and draw initial phases uniformly as $\phi_i(0) \sim \mathcal{U}(0, 2\pi)$.

\subsubsection{Opinion and coherence dynamics}
Fig.~\ref{fig:opinion_neighbors} shows opinion trajectories of the quantum model under the product state approximation~\eqref{eq:master_marginal}, with agents colored by community, alongside scatter plots of $x_i(t)$ against the weighted mean of their neighbors $\bar{x}_i(t) = \sum_j w_{ij} x_j(t)$ at three snapshots.
At $t=0$, the scatter plot of $x_i$ against $\bar{x}_i$ is widely dispersed, reflecting the random initialization of opinions within each community.
As the dynamics evolve, opinions within the same community align progressively, and by $t=2$ the points begin to cluster along the diagonal with visible community separation.
By $t=12$, the scatter collapses tightly onto the diagonal, indicating that each agent's opinion has converged to a value close to its neighbors' mean.
The tight alignment of $x_i$ and $\bar{x}_i$ at steady state reflects the balance between each agent's stubbornness toward its anchor opinion and the social pressure from its neighbors.

The quantum coherence $\rho_i^{01}(t)$ defined in~\eqref{eq:polar} encodes information about each agent's cognitive ambivalence that has no classical counterpart.
Under the product state approximation \eqref{eq:product_ansatz}, the coherence magnitude decays as $c_i(t) = c_i(0)e^{-t/2}$, independent of the network structure and of all other agents, as derived in Appendix~\ref{app:coherence}.
The coherence phase evolves as $\dot{\phi}_i = -2\bar{x}_i(t)$, where $\bar{x}_i(t) = \sum_j w_{ij} x_j(t)$ is the influence-weighted mean opinion of agent~$i$'s neighbors, so the rotation rate is set by the local social field.
\begin{figure}[htbp]
    \centering
    \includegraphics[width=0.19\linewidth]{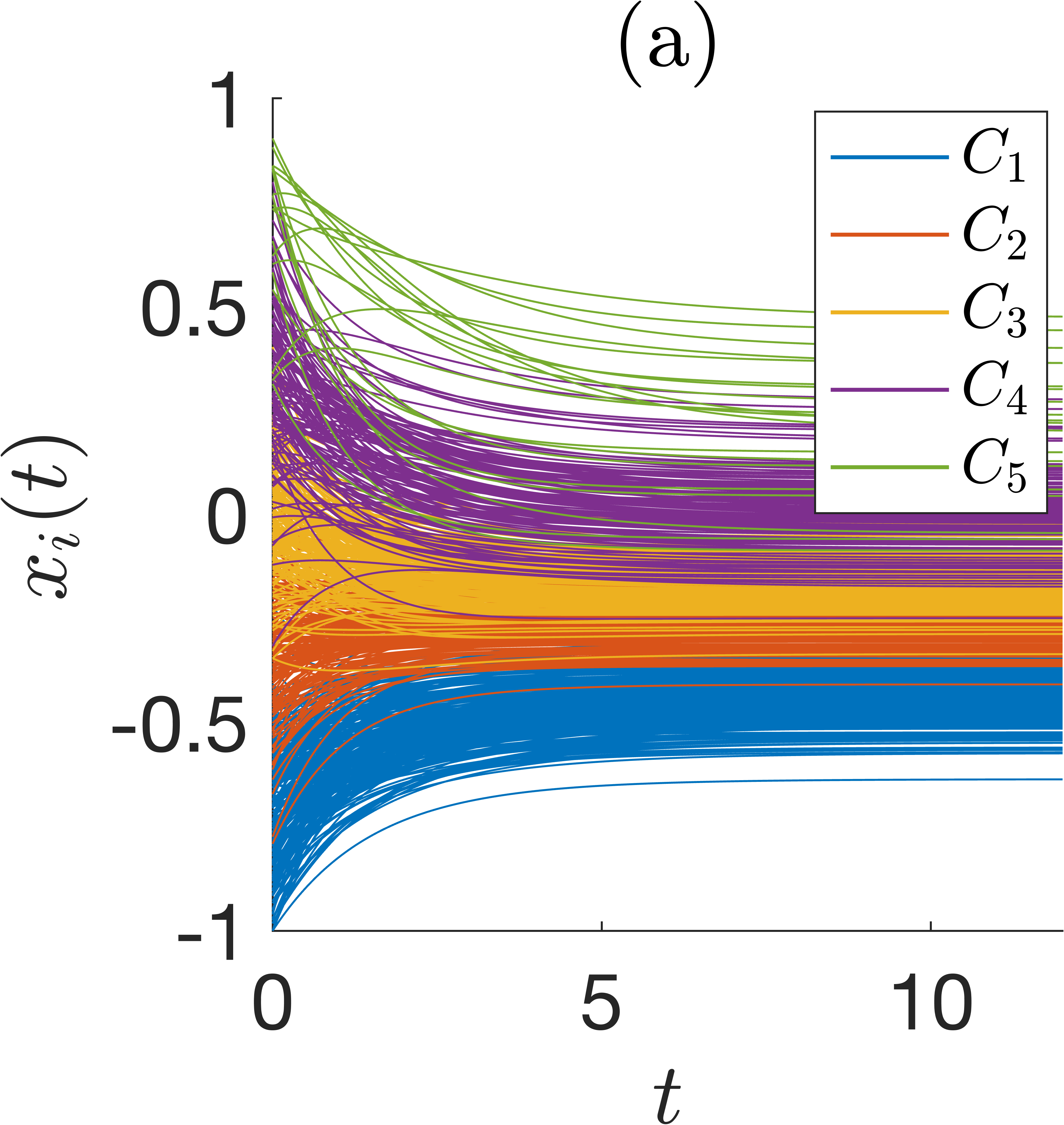}
    \includegraphics[width=0.19\linewidth]{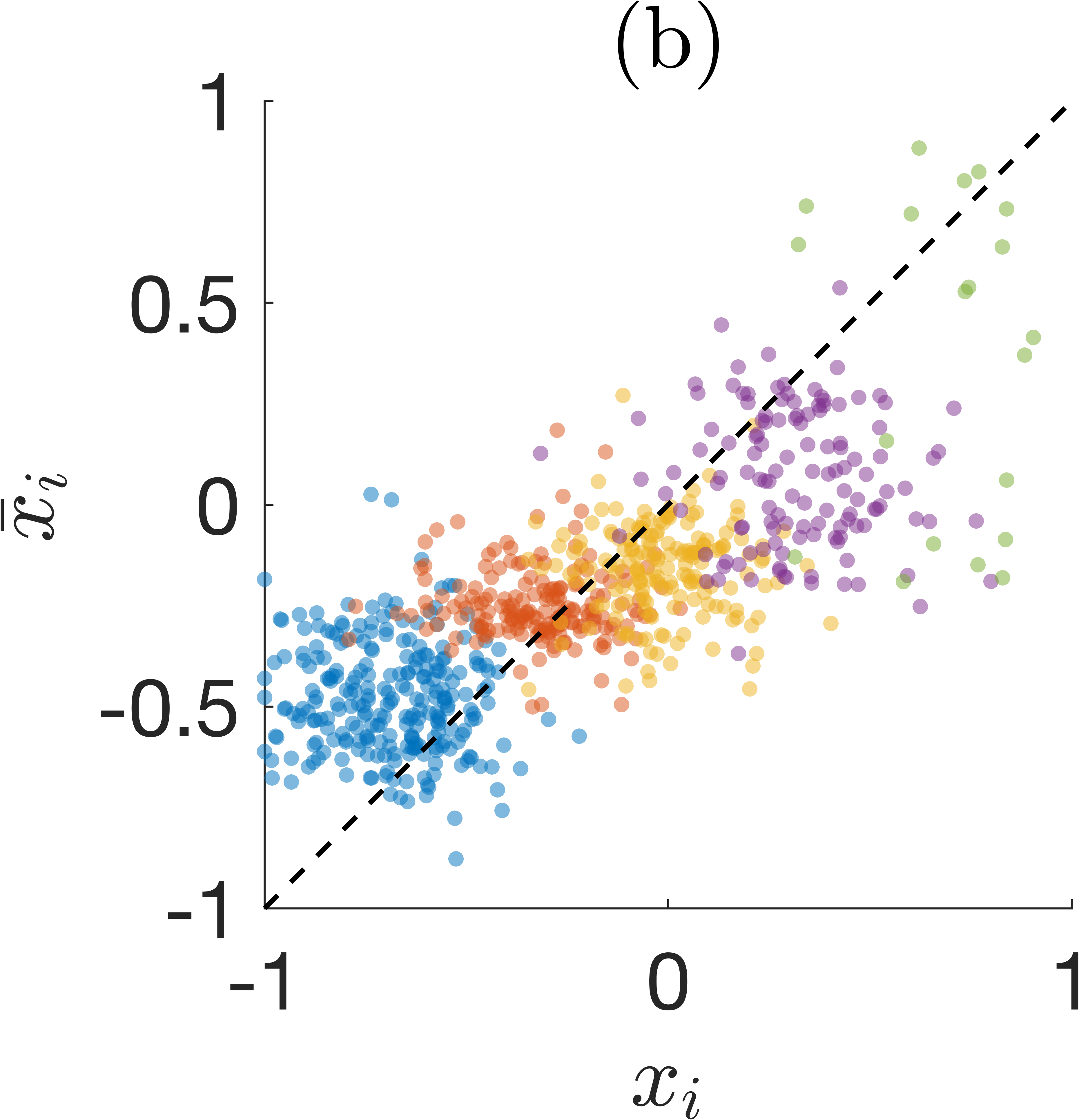}
    \includegraphics[width=0.19\linewidth]{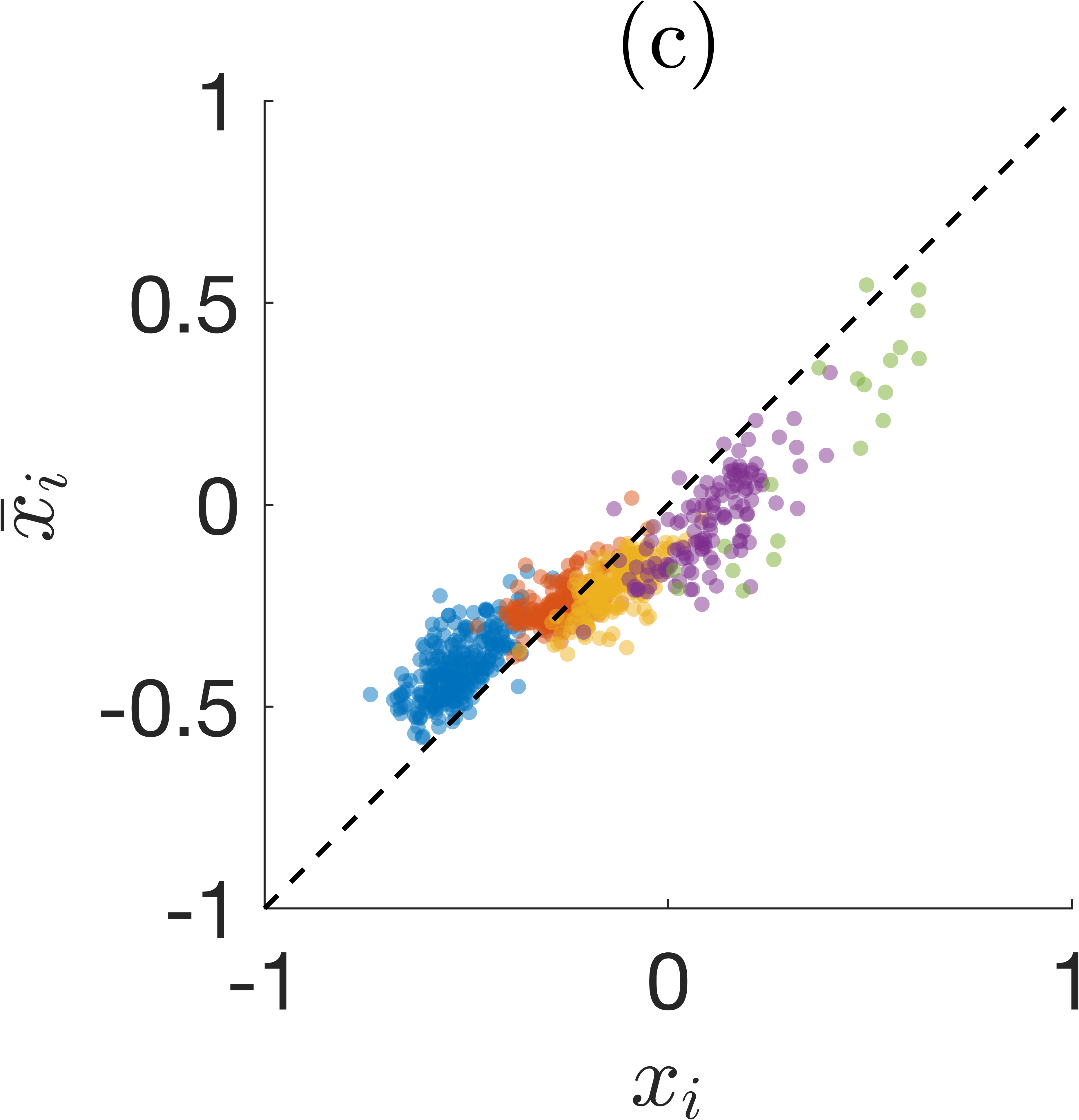}
    \includegraphics[width=0.19\linewidth]{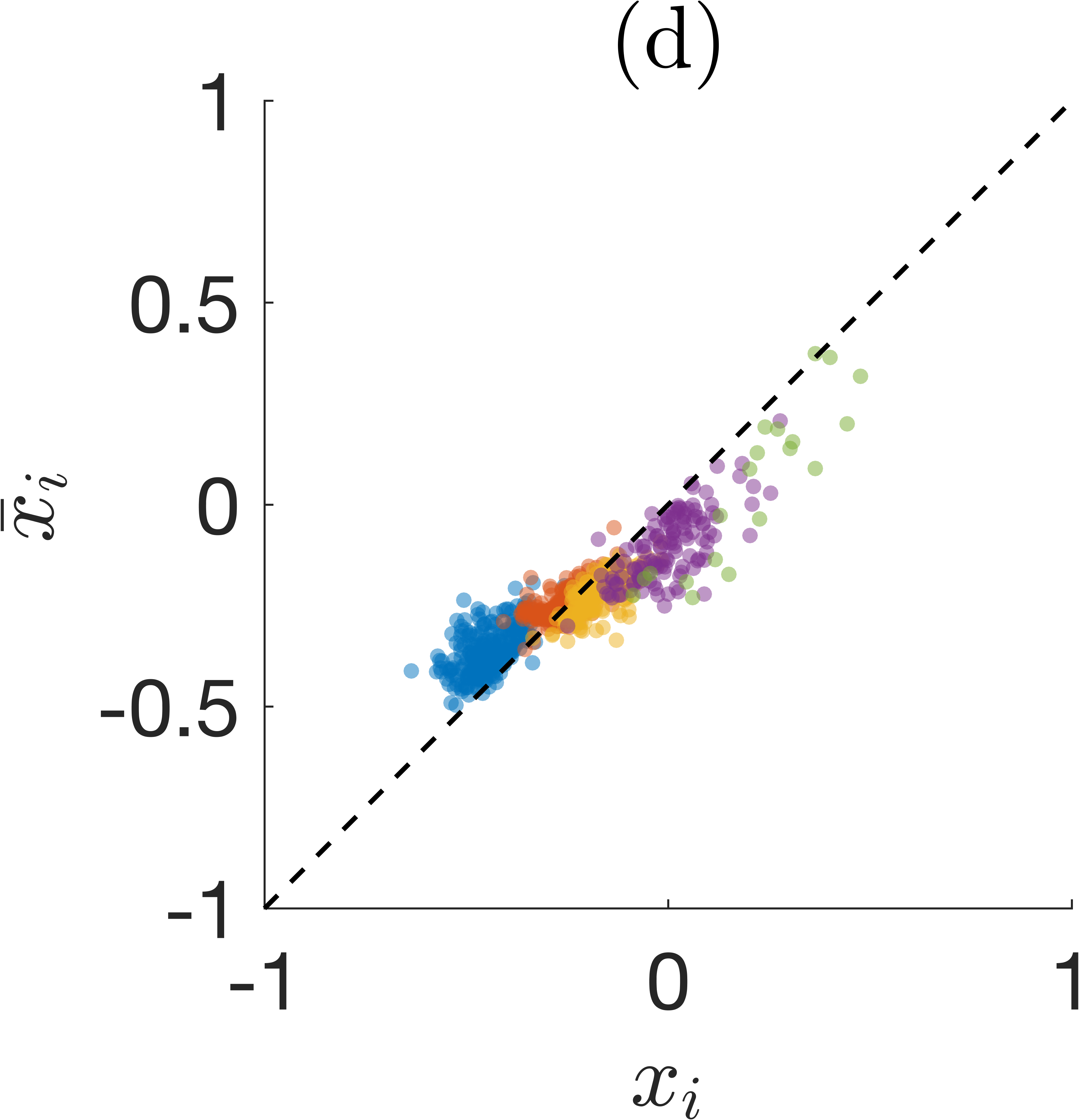}
    \includegraphics[width=0.19\linewidth]{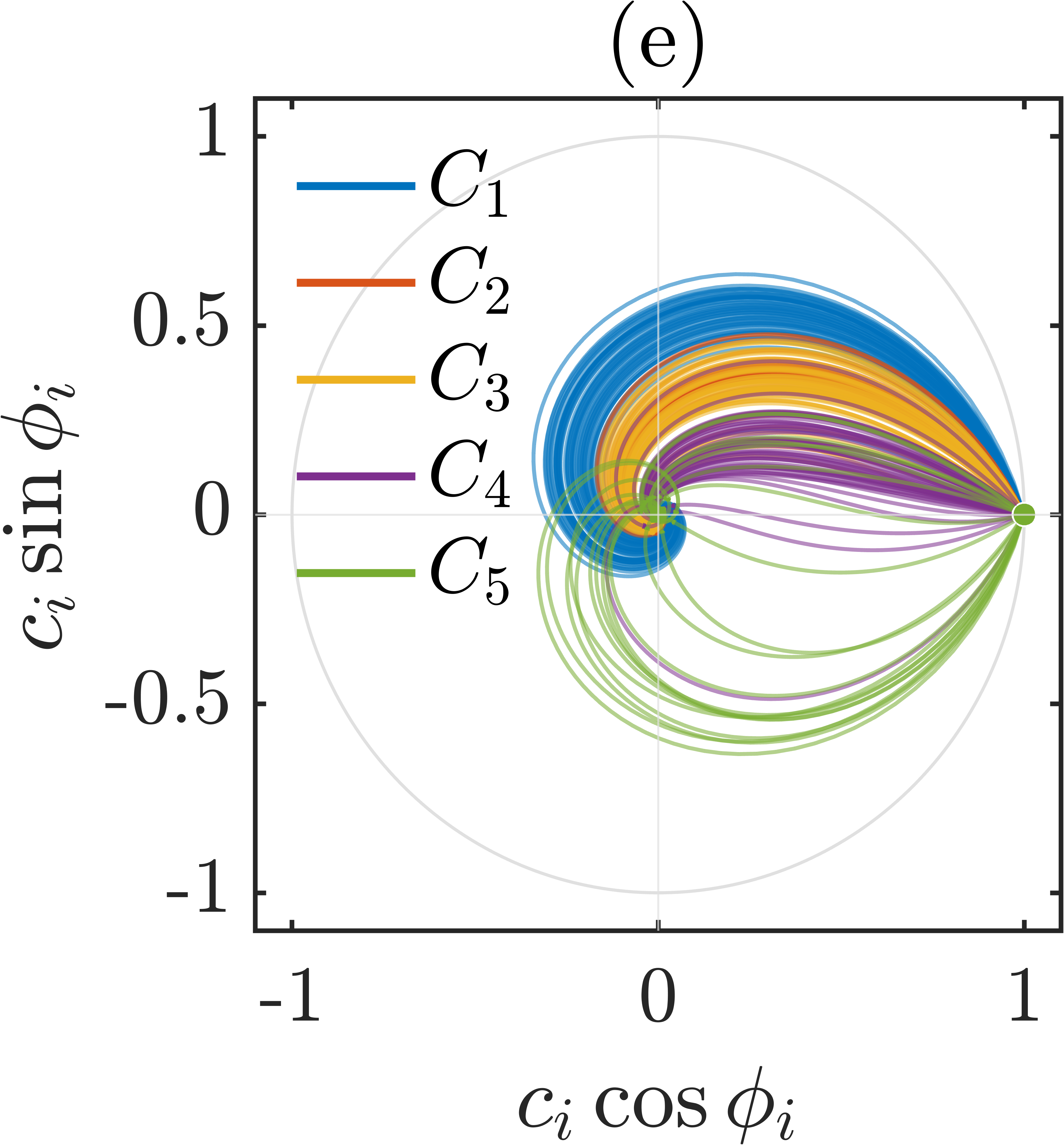}
    \caption{(a): opinion trajectories $x_i(t)$ colored by community. (b--d): scatter plots of agent opinion $x_i$ against neighbor mean opinion $\bar{x}_i$ at $t = 0$, $2$, and $12$, showing progressive alignment within communities and convergence to community-dependent steady states. (e): Trajectories of the quantum coherence $\rho_i^{01}(t)$ in the complex plane for 50 randomly selected agents per community, colored by community membership.}
    \label{fig:opinion_neighbors}
\end{figure}

Fig.~\ref{fig:opinion_neighbors}(e) shows the trajectories of $\rho_i^{01}(t) = \frac{c_i}{2}e^{-i\phi_i}$ in the complex plane for 50 randomly selected agents from each community.
Since $\phi_i(0)$ is initialized uniformly on $[0, \pi)$, we shift each trajectory so that it begins at $(1, 0)$ at $t=0$, marked by a filled circle.
As $t \to \infty$, $c_i \to 0$ and each trajectory spirals inward toward the origin, reflecting complete decoherence of the agent's cognitive state.
The rotation direction encodes the sign of the local social field.
Agents in communities with an overall positive opinion mean, such as $C_5$, rotate clockwise, while agents in communities with a negative opinion mean, such as $C_1$, rotate counterclockwise.
Agents in community $C_4$, whose opinion mean is near zero, exhibit slow and irregular rotation as the social field is weak and changes sign during the transient.
The rotation rate is not uniform over time: it is fastest during the early transient when opinion differences across communities are large, and slows as opinions converge toward their community-dependent steady states and the social field $\bar{x}_i$ stabilizes.
Communities with greater opinion extremity, such as $C_1$ and $C_5$, produce tighter and more regular spiral arms, while communities with opinion means near zero produce broader and more dispersed trajectories.

\subsubsection{Order effects}
We consider two incompatible measurement operators: the direct opinion poll $\hat{\Pi}_A = \ketbra{0}{0}$ and a differently framed question operator $\hat{\Pi}_B(\theta)$, defined as
\begin{equation}
\hat{\Pi}_B(\theta) = \ketbra{\psi(\theta)}{\psi(\theta)}\,, \quad
\ket{\psi(\theta)} = \cos\tfrac{\theta}{2}\ket{0} + \sin\tfrac{\theta}{2}\ket{1}\,,
\label{eq:PiB}
\end{equation}
where $\theta \in (0,\pi)$ is the incompatibility angle between the two questions.
Both $\hat{\Pi}_A$ and $\hat{\Pi}_B(\theta)$ are rank-1 projectors with $[\hat{\Pi}_A, \hat{\Pi}_B(\theta)] \neq 0$ for $\theta \neq 0$, so the order of measurement affects subsequent measurement results.
The operator $\hat{\Pi}_B(\theta)$ represents a similar question to $Q_A$ but posed under a slightly different cognitive framing that rotates the measurement axis by $\theta$.
At $\theta = 0$, the two questions are the same and no order effect takes place, while at $\theta = \pi/2$ they are maximally incompatible.

We substitute~\eqref{eq:PiB} into the order-effect asymmetry formulas~\eqref{eq:DeltaP1} and~\eqref{eq:DeltaP2} and obtain
\begin{equation}
\begin{aligned}
\Delta P_i^{(1)}(t;\theta) &= \tfrac{\sin^2\theta}{4}\,x_i(t) - \sin\tfrac{\theta}{2}\cos^3\tfrac{\theta}{2}\, c_i(t)\cos\phi_i(t)\,, \\
\Delta P_i^{(2)}(t;\theta) &= \tfrac{\sin^2\theta}{2}\,x_i(t) - \tfrac{\sin 2\theta}{4}\, c_i(t)\cos\phi_i(t)\,,
\end{aligned}
\label{eq:DeltaP_formula}
\end{equation}
where $x_i(t)$ is the opinion, $c_i(t)$ is the coherence magnitude, and $\phi_i(t)$ is the coherence phase.
Both asymmetries vanish when $\theta = 0$, consistent with the commutativity of $\hat{\Pi}_A$ and $\hat{\Pi}_B(0)$.
Each asymmetry has two contributions with distinct interpretations.
The first term, proportional to $x_i(t)$, reflects the agent's opinion and persists even in the absence of quantum coherence.
The second term, proportional to $c_i(t)\cos\phi_i(t)$, is a purely quantum contribution that vanishes when $c_i = 0$ and depends on the coherence phase $\phi_i$, encoding the direction of cognitive lean on the equatorial plane of the Bloch sphere.
The two contributions can reinforce or cancel depending on the sign of $\cos\phi_i(t)$ relative to $x_i(t)$, making the order effect sensitive to both the agent's expressed opinion and the structure of the agent's cognitive ambivalence.
When $\theta = \pi/2$, Eq.~\eqref{eq:DeltaP_formula} simplifies to
\begin{equation}
\Delta P_i^{(1)} = \tfrac{1}{4}(x_i - c_i\cos\phi_i)\,, \quad \Delta P_i^{(2)} = \tfrac{1}{2}x_i\,.
\end{equation}
The quantum coherence term vanishes from $\Delta P_i^{(2)}$ at maximal incompatibility because a non-selective measurement of $Q_B$ preserves $r_i^x = c_i\cos\phi_i$, so this component contributes equally to both orderings and cancels from the asymmetry.

Fig.~\ref{fig:DeltaP} shows $\Delta P_i^{(1)}$ for all agents by community at $\theta = \pi/2$.
At early times, the trajectories fluctuate and span both signs within each community, reflecting the large initial coherence $c_i(0) = 1$ and the diversity of initial phases $\phi_i(0)$.
As $c_i(t) = c_i(0)e^{-t/2}$ decays exponentially, the quantum contribution to $\Delta P_i^{(1)}$ diminishes, and the trajectories converge to a residual determined solely by $x_i(\infty)$.
At steady state, communities with a strong opinion bias, such as $C_1$ and $C_2$, cluster at negative residuals while $C_5$ clusters at positive residuals.
\begin{figure}[htp]
    \centering
    \includegraphics[height=0.25\linewidth]{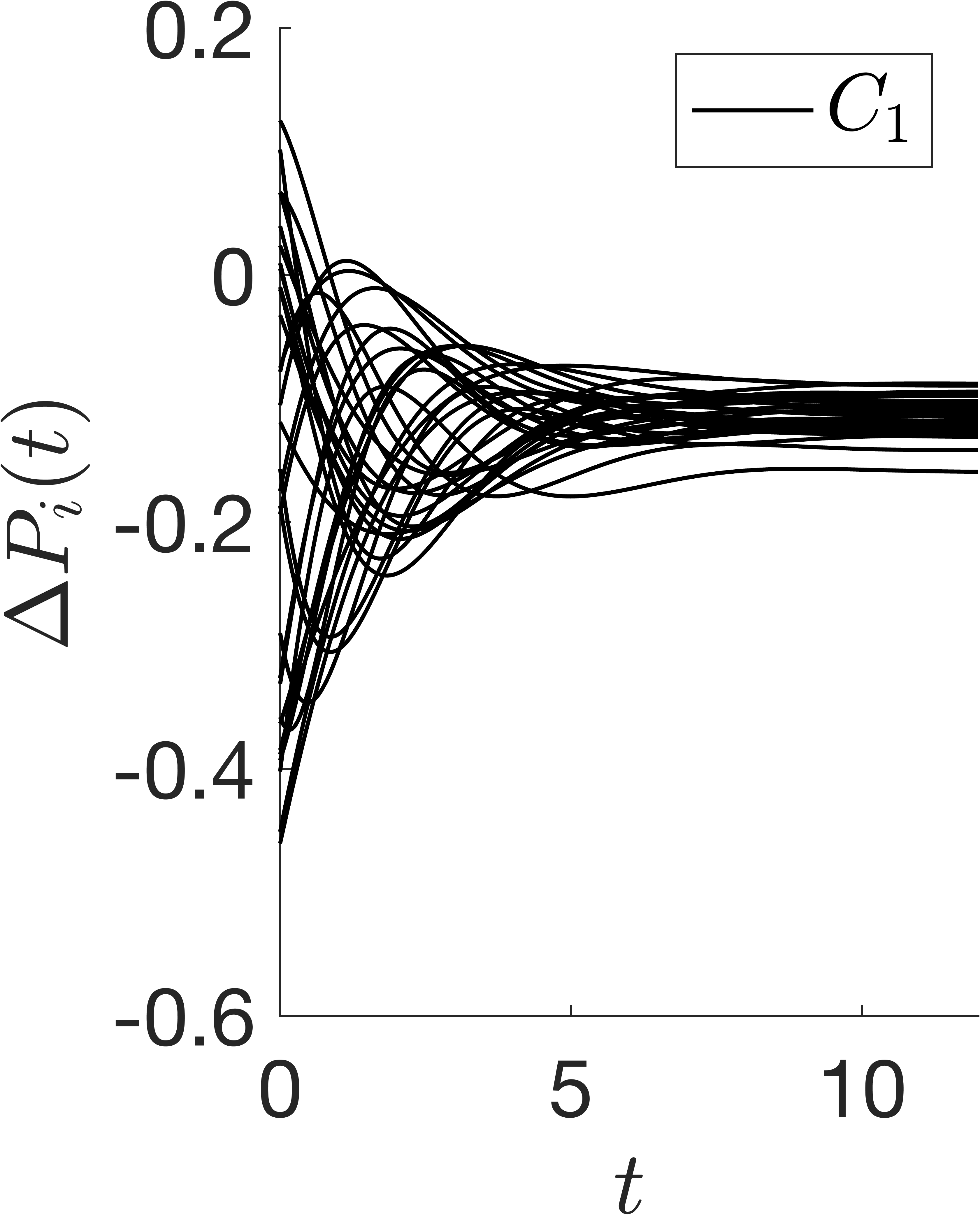}
    \includegraphics[trim={22 0 0 0},clip,height=0.25\linewidth]{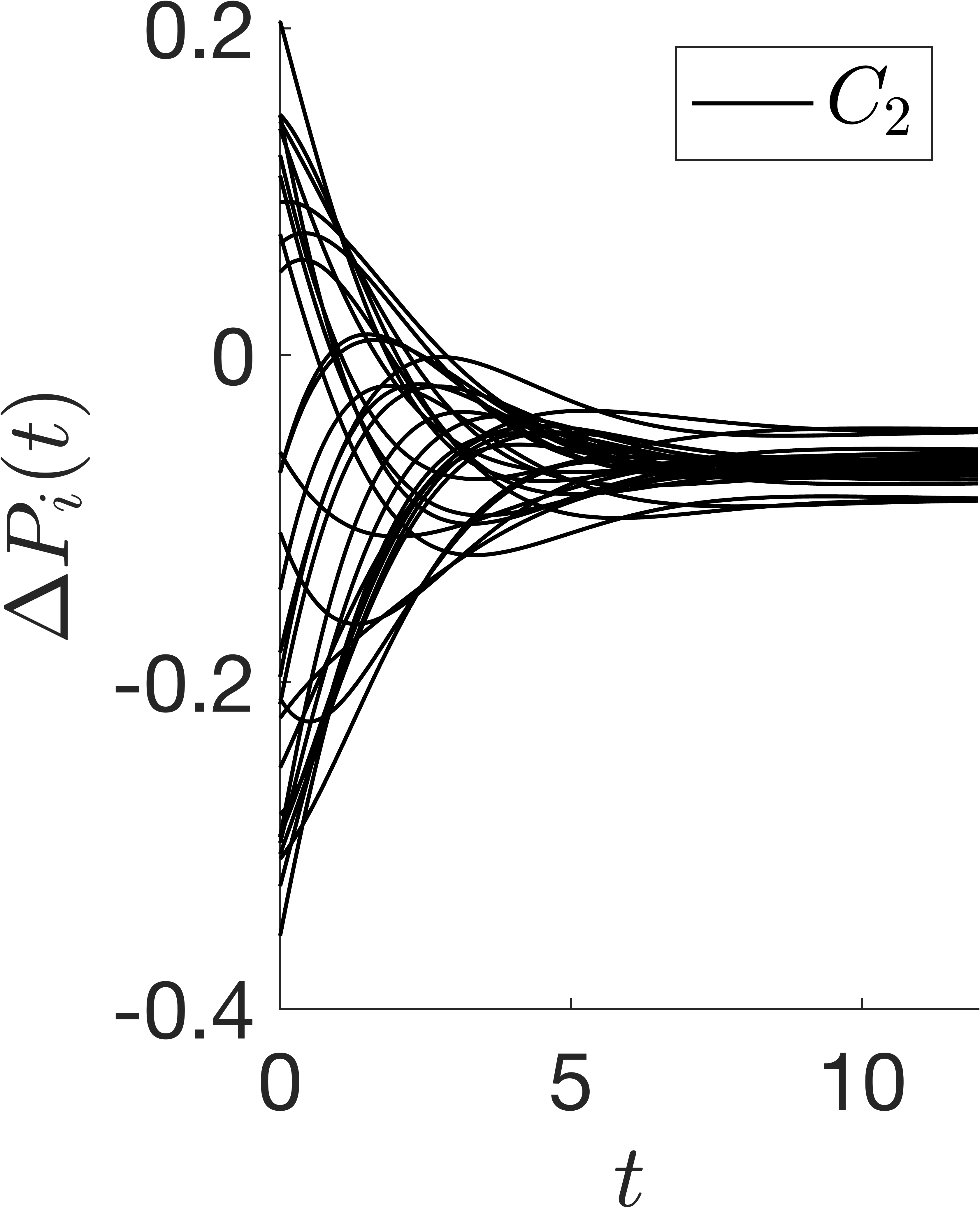}
    \includegraphics[trim={22 0 0 0},clip,height=0.25\linewidth]{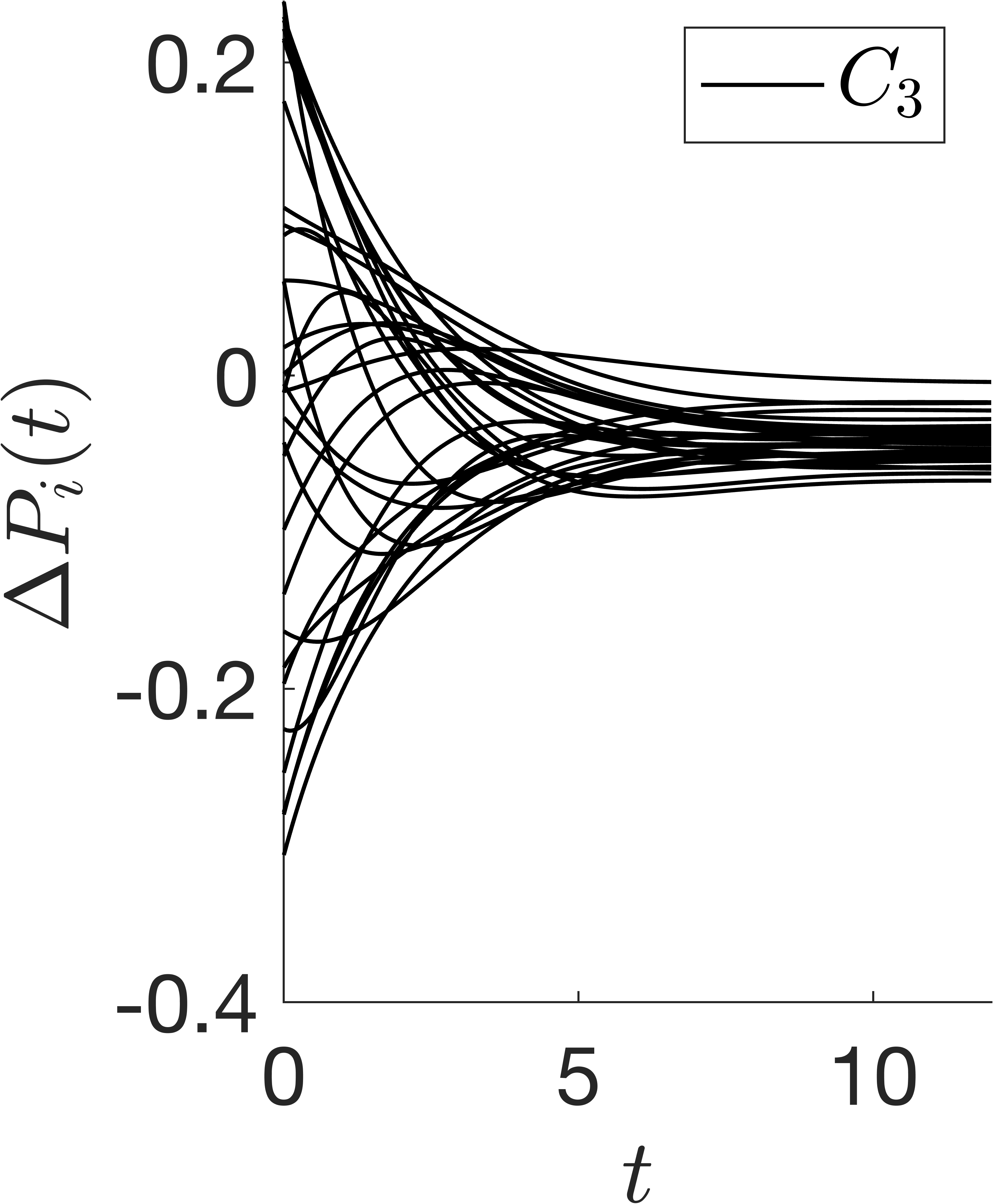}
    \includegraphics[trim={22 0 0 0},clip,height=0.25\linewidth]{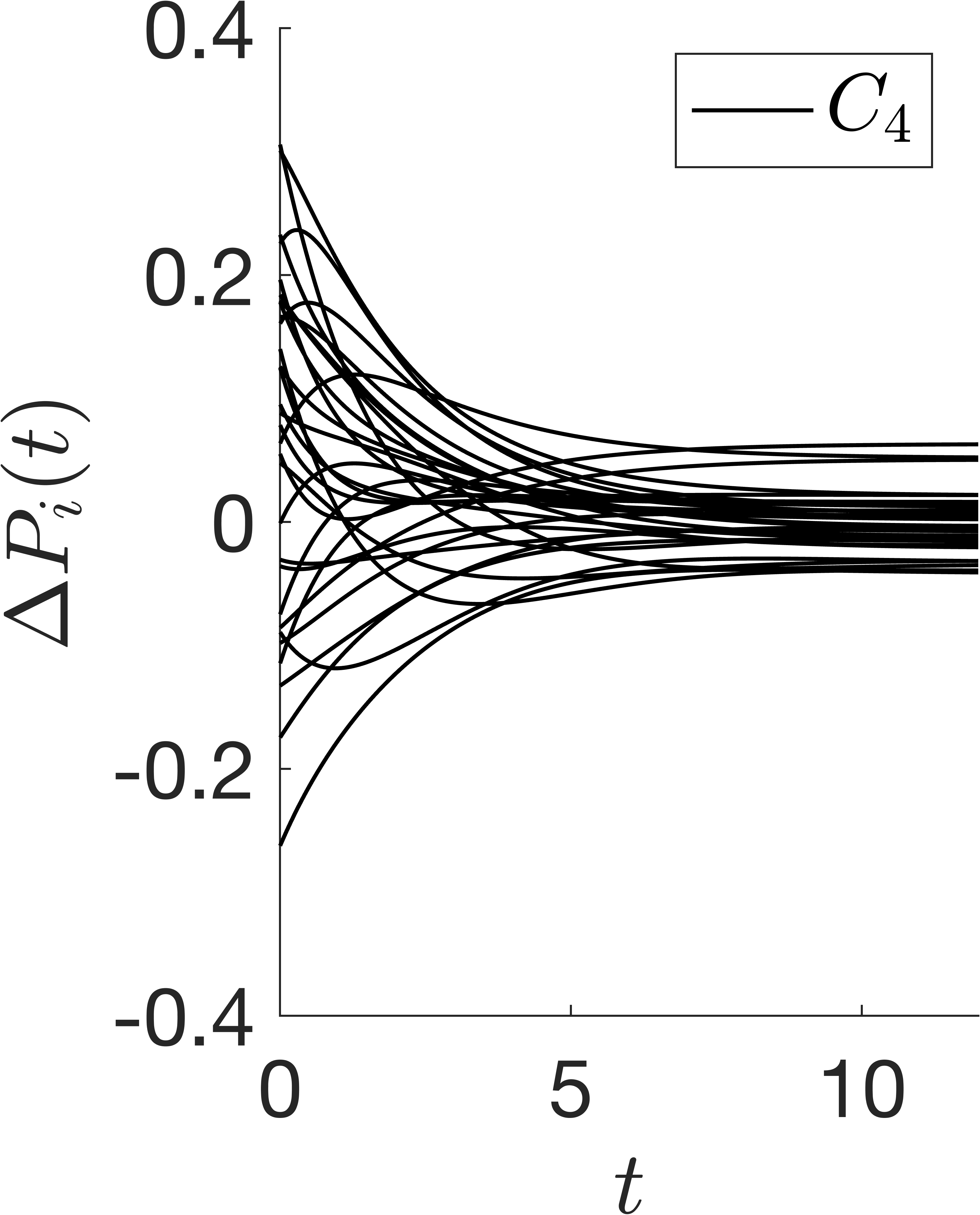}
    \includegraphics[trim={22 0 0 0},clip,height=0.25\linewidth]{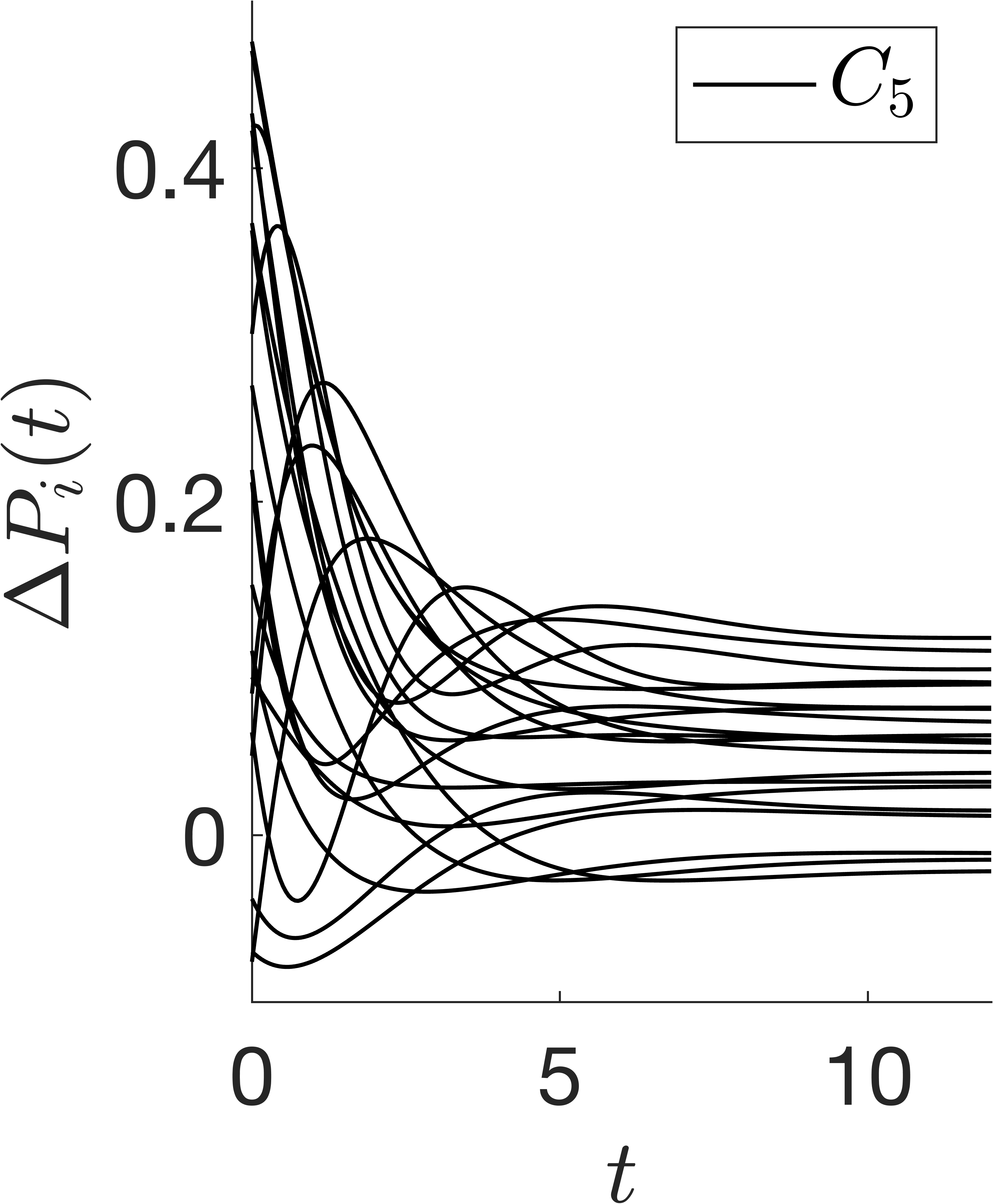}
    \caption{Order-effect asymmetry $\Delta P_i^{(1)}(t)$ at $\theta = \pi/2$ for all agents, grouped by community. Trajectories converge from large transient fluctuations to community-dependent steady-state residuals as coherence magnitude $c_i$ decays.}
    \label{fig:DeltaP}
\end{figure}

Fig.~\ref{fig:DP_distribution} shows the population-level distribution and magnitude of $\Delta P_i^{(1)}$.
The left panel shows the empirical density of $\Delta P_i^{(1)}$ across all agents at several time snapshots for $\theta = \pi/2$.
At $t=0$, the distribution is broad and nearly symmetric around zero, reflecting the large initial coherence $c_i(0) = 1$ and uniformly random phases $\phi_i(0)$.
As time progresses, the distribution narrows and its center shifts slightly negative, consistent with the negative mean opinion of the network.
By $t = 12$, the distribution is sharply concentrated, indicating that the order effect has lost its quantum contribution and is governed entirely by the residual classical opinions $x_i(\infty)$.
The right panel shows the population-averaged absolute order effect $\bar{P}(t;\theta) = \frac{1}{n}\sum_i |\Delta P_i^{(1)}(t;\theta)|$ for various values of $\theta$.
All curves decay monotonically as coherence is lost, consistent with the exponential decay $c_i(t) = c_i(0)e^{-t/2}$ which drives the transient order effect.
Larger values of $\theta$ yield larger initial $\bar{P}$, since both the opinion and coherence coefficients in~\eqref{eq:DeltaP_formula} grow with $\theta$ for $\theta \in (0, \pi/2]$.
For $\theta > \pi/2$ (dashed lines), the curves are smaller and decay to lower residuals, as the opinion coefficient $\frac{\sin^2\theta}{4}$ decreases for $\theta \in (\pi/2, \pi)$.
At steady state, all curves converge to nonzero residuals determined by $\frac{\sin^2\theta}{4}|x_i(\infty)|$, with $\theta$ near $\pi/2$ retaining the largest residual order effect.
Since $\sin^2\theta = \sin^2(\pi - \theta)$, pairs of angles symmetric about $\pi/2$ share the same opinion coefficient and therefore converge to the same steady-state value, explaining the overlapping curves observed at large $t$ for $\theta$ and $\pi - \theta$ in Fig.~\ref{fig:DP_distribution}.
\begin{figure}[htp]
    \centering
    \includegraphics[width=0.28\linewidth]{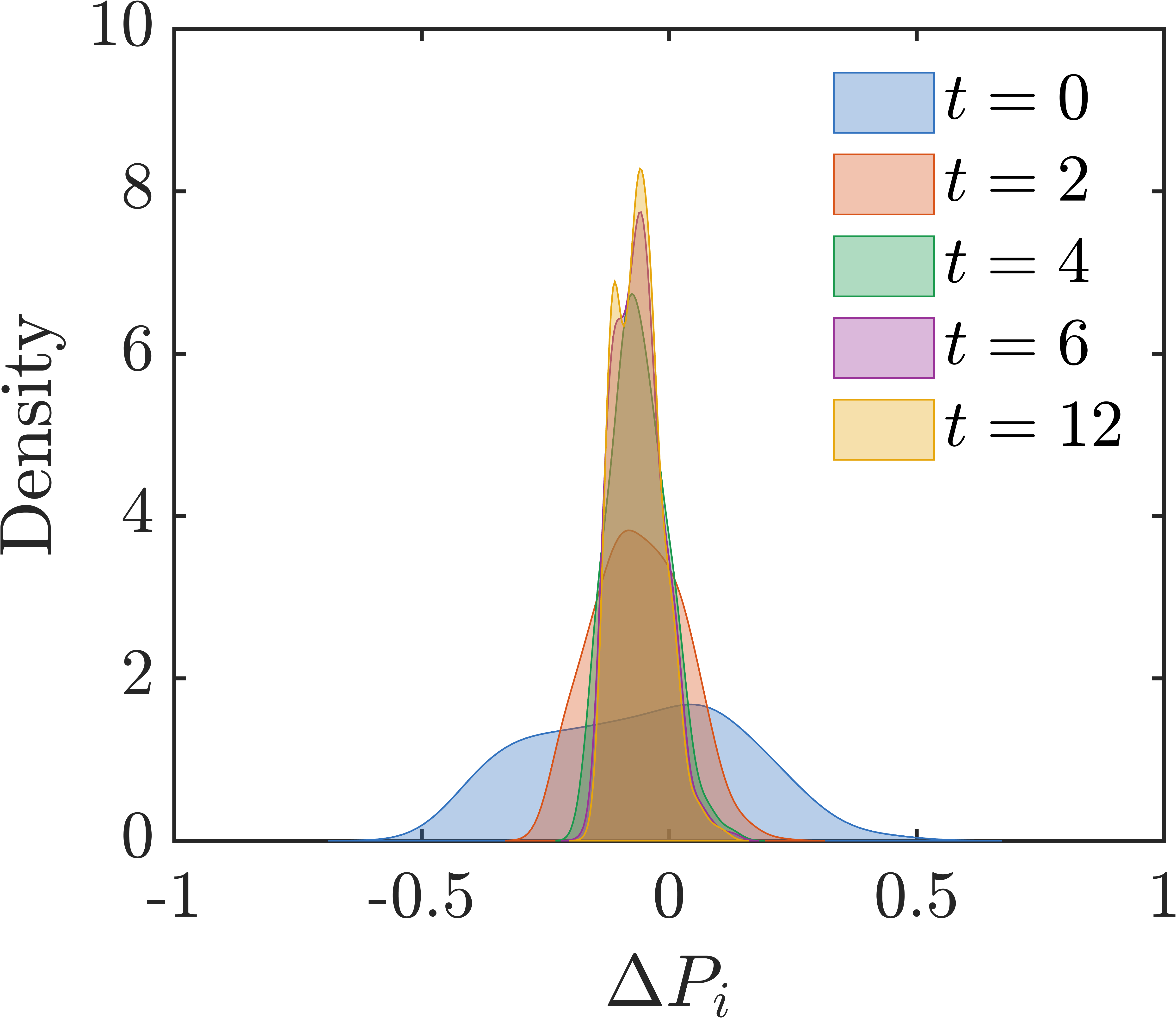} \qquad  \qquad 
    \includegraphics[width=0.28\linewidth]{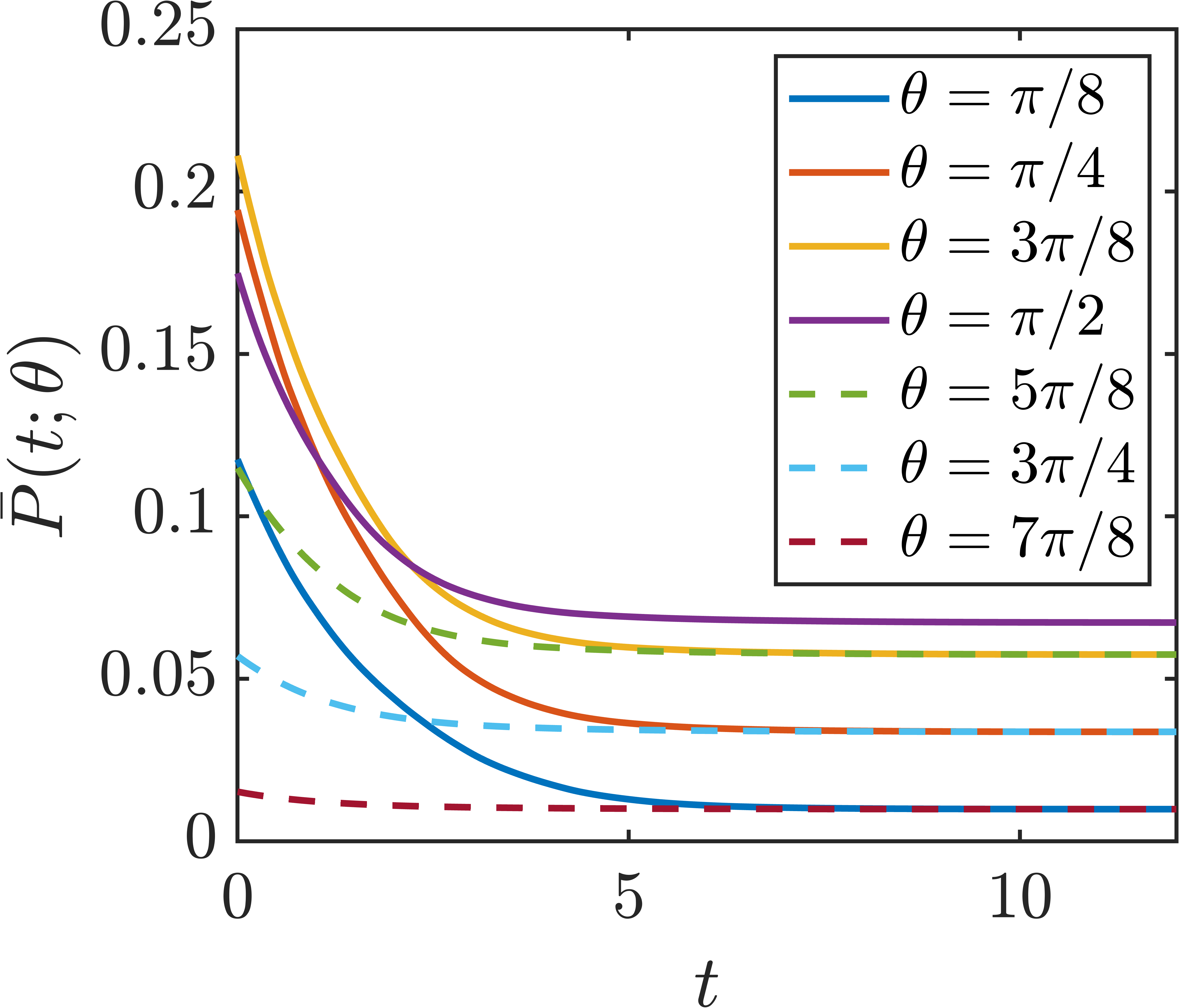}
    \caption{(left): empirical density of $\Delta P_i^{(1)}$ across all agents at $\theta = \pi/2$ for several time snapshots. (right): averaged absolute order effect $\bar{P}(t;\theta)$ for different values of $\theta$.}
    \label{fig:DP_distribution}
\end{figure}

\section{Conclusions and discussion}
\label{sec:conclusion}

We proposed a quantum model of opinion dynamics governed by the Lindblad master equation, in which each agent's cognitive state is a density matrix and social interactions are modeled by dissipative jump operators.
Opinions arise as expectation values of the opinion observable $\hat{O}$, and order effects follow directly from the non-commutativity of measurement operators.
The diagonal of the joint density matrix satisfies an exact continuous-time Markov chain, from which pairwise opinion covariances are defined.
Under the product state approximation, the reduced single-agent equation recovers the classical FJ model~\cite{friedkin1990social}, while the coherence magnitude decays exponentially at a rate independent of the network.

We tested the model on synthetic and real-world networks and observed that the opinion trajectories of the quantum and classical models follow similar trends, while the deviations depend on the network structure, being largest for sparse networks (chain and ring) and smallest for the complete graph.
We have shown that the order effects decay with quantum coherence and converge to a steady state determined by the equilibrium opinion. 
We also examined inter-agent correlations, whose transient dynamics are network-dependent but converge to the same steady state for undirected networks, which is a consequence of the symmetry of the jump operators.

In this paper, we used a Lindblad master equation to model opinion dynamics on social networks, representing each agent's cognitive state as a density matrix and modeling social interactions via dissipative jump operators. 
There are a variety of interesting ways to extend our investigation.
We modeled each agent as a single qubit, which restricts the opinion space to two outcomes. 
It is natural to extend this framework to qudits or higher-dimensional quantum systems, which would allow for a richer representation of opinion states and facilitate empirical calibration against survey data.
In our model, the Lindblad equation is linear, which precludes nonlinear opinion models such as bounded-confidence models. 
One can introduce state-dependent jump operators or interaction rates to yield a nonlinear master equation.
In our framework, the master equation is Markovian. 
Extending to non-Markovian master equations would capture memory effects in social influence, where the impact of past interactions persists over time.
We tested our model on a small set of networks. It is important to conduct more extensive numerical experiments on a broader range of network structures, such as community-based random graphs, scale-free networks, and larger real-world networks, to better understand the dependence of quantum effects on networks.
Finally, empirical calibration of the coherence parameters and anchoring strengths from survey or experimental data remains an important open direction.

\section*{Acknowledgments}
The author acknowledges support from the National Science
Foundation Grants DMS-2350325 and DMS-2514053.

\appendix

\section{Derivation of the reduced master equation for the single-agent density operators}
\label{app:marginal}

In this subsection, we derive the governing equation for the single-agent density operator $\rho_i$ under the product state approximation~\eqref{eq:product_ansatz}.
We apply $\mathrm{Tr}_{\neq i}$ to~\eqref{eq:master_full} and treat the Hamiltonian, the pairwise dissipators, and the anchor dissipators in turn.

Recall the Hamiltonian is $\hat{H} = -\sum_{(k,\ell)\in E} w_{k\ell}\hat{O}_k\hat{O}_\ell$.
For any term with $k \neq i$ and $\ell \neq i$, the commutator $[\hat{O}_k\hat{O}_\ell, \rho]$ acts entirely outside $\mathcal{H}_i$ and its partial trace vanishes.
For terms involving site $i$, we apply $\mathrm{Tr}_{\neq i}$ to $[\hat{O}_i\hat{O}_j, \rho]$ and use the product state approximation $\rho_{ij} \approx \rho_i \otimes \rho_j$, which yields
\begin{equation}
\mathrm{Tr}_{\neq i}\bigl([\hat{O}_i\hat{O}_j, \rho]\bigr)
= \bigl[\hat{O}_i\,\mathrm{Tr}(\rho_j\hat{O}_j),\,\rho_i\bigr]
= x_j\,[\hat{O}_i, \rho_i]\,.
\end{equation}
We sum over all neighbors $j$ of $i$ and obtain
\begin{equation}
\mathrm{Tr}_{\neq i}\bigl(-i[\hat{H},\rho]\bigr)
= -i\bigl[\hat{H}_i^{\mathrm{eff}}, \rho_i\bigr]\,,
\label{eq:app_ham}
\end{equation}
with $\hat{H}_i^{\mathrm{eff}}$ defined in Eq.~\eqref{eq:Heff}.

We next compute the partial trace of the pairwise interaction dissipators.
We consider the term $(1-\lambda_i)w_{ij}\mathcal{D}[\hat{L}_{ij}^+][\rho]$ with $\hat{L}_{ij}^+ = \hat{\sigma}_i^+\hat{\sigma}_j^-$ and apply $\mathrm{Tr}_{\neq i}$ to each part of the dissipator separately.
For the jump part, we use the product state approximation and the cyclic property of the trace and obtain
\begin{equation}
\begin{aligned}
\mathrm{Tr}_{\neq i}\bigl(\hat{L}_{ij}^+\rho(\hat{L}_{ij}^+)^\dagger\bigr)
&= \mathrm{Tr}_{\neq i}\bigl(\hat{\sigma}_i^+\hat{\sigma}_j^-(\rho_i\otimes\rho_j)\hat{\sigma}_i^-\hat{\sigma}_j^+\bigr) \\
&= \hat{\sigma}_i^+\rho_i\hat{\sigma}_i^-\mathrm{Tr}(\hat{\sigma}_j^-\rho_j\hat{\sigma}_j^+) \\
&= \mathrm{Tr}(\rho_j\hat{\sigma}_j^+\hat{\sigma}_j^-)\,\hat{\sigma}_i^+\rho_i\hat{\sigma}_i^-\,.
\end{aligned}
\end{equation}
Since $\hat{\sigma}_j^+\hat{\sigma}_j^- = \ketbra{0}{0}_j$, we have $\mathrm{Tr}(\rho_j\hat{\sigma}_j^+\hat{\sigma}_j^-) = \frac{1+x_j}{2}$.
Therefore, the anticommutator part yields
\begin{equation}
\begin{aligned}
-\tfrac{1}{2}\mathrm{Tr}_{\neq i}\bigl(\{(\hat{L}_{ij}^+)^\dagger\hat{L}_{ij}^+, \rho\}\bigr)
= -\tfrac{1}{2}\mathrm{Tr}(\rho_j\hat{\sigma}_j^+\hat{\sigma}_j^-)\,\{\hat{\sigma}_i^-\hat{\sigma}_i^+, \rho_i\}  = -\frac{1+x_j}{4}\,\{\hat{\sigma}_i^-\hat{\sigma}_i^+, \rho_i\}\,.
\end{aligned}
\end{equation}
We combine the jump and anticommutator terms and obtain
\begin{equation}
\mathrm{Tr}_{\neq i}\bigl(\mathcal{D}[\hat{L}_{ij}^+][\rho]\bigr)
= \frac{1+x_j}{2}\,\mathcal{D}[\hat{\sigma}_i^+][\rho_i]\,.
\label{eq:app_social_plus}
\end{equation}
We apply the identical calculation to $\hat{L}_{ij}^- = \hat{\sigma}_i^-\hat{\sigma}_j^+$, using $\hat{\sigma}_j^-\hat{\sigma}_j^+ = \ketbra{1}{1}_j$ and $\mathrm{Tr}(\rho_j\hat{\sigma}_j^-\hat{\sigma}_j^+) = \frac{1-x_j}{2}$, and obtain
\begin{equation}
\mathrm{Tr}_{\neq i}\bigl(\mathcal{D}[\hat{L}_{ij}^-][\rho]\bigr)
= \frac{1-x_j}{2}\,\mathcal{D}[\hat{\sigma}_i^-][\rho_i]\,.
\label{eq:app_social_minus}
\end{equation}

For the anchor dissipators, since $\hat{\sigma}_i^\pm$ act only on $\mathcal{H}_i$, we have
\begin{equation}
\mathrm{Tr}_{\neq i}\bigl(\mathcal{D}[\hat{\sigma}_i^\pm][\rho]\bigr)
= \mathcal{D}[\hat{\sigma}_i^\pm][\rho_i]\,.
\label{eq:app_anchor}
\end{equation}

Combining Eqs.~\eqref{eq:master_full}, \eqref{eq:app_ham}, \eqref{eq:app_social_plus}, \eqref{eq:app_social_minus} and \eqref{eq:app_anchor}, we obtain the governing equation of $\rho_i$ in~\eqref{eq:master_marginal}.

\section{Derivation of the quantum coherence dynamics}
\label{app:coherence}
We derive governing equations for the quantum coherence magnitude $c_i$ and phase $\phi_i$ in this subsection. 

Recall that the off-diagonal element $\rho_i^{01} = \bra{0}\rho_i\ket{1} = \frac{c_i}{2}e^{-i\phi_i}$ and satisfies the master equation~\eqref{eq:master_marginal}. 
Since $\hat{H}_i^{\mathrm{eff}} = -\Omega_i\hat{O}_i$ in \eqref{eq:Heff} and $\hat{O}_i$ is diagonal in the computational basis, its matrix elements are $h_{00} = \bra{0}\hat{H}_i^{\mathrm{eff}}\ket{0} = -\Omega_i$ and $h_{11} = \bra{1}\hat{H}_i^{\mathrm{eff}}\ket{1} = \Omega_i$.
We apply $\bra{0}\cdot\ket{1}$ to the Hamiltonian term and obtain
\begin{equation}
\bra{0} -i[\hat{H}_i^{\mathrm{eff}},\rho_i]\ket{1} = -i(h_{00} - h_{11})\rho_i^{01} = 2i\Omega_i\,\rho_i^{01}\,.
\label{eq:rho01_H}
\end{equation}

We next compute the dissipator contributions.
For each operator in~\eqref{eq:master_marginal}, we evaluate $\bra{0}\mathcal{D}[\hat{L}][\rho_i]\ket{1}$ directly. 
For $\hat{L} = \hat{\sigma}_i^+$, the jump part vanishes since
\begin{equation}
\bra{0}\hat{\sigma}_i^+\rho_i\hat{\sigma}_i^-\ket{1}
= \bra{0}\ket{0}\bra{1}\rho_i\ket{1}\braket{0}{1} = 0\,.
\end{equation}
For the anticommutator part, notice that $(\hat{\sigma}_i^+)^\dagger\hat{\sigma}_i^+ = \hat{\sigma}_i^-\hat{\sigma}_i^+ = \ketbra{1}{1}$, which gives $\bra{0}\hat{\sigma}_i^-\hat{\sigma}_i^+\ket{0} = 0$ and $\bra{1}\hat{\sigma}_i^-\hat{\sigma}_i^+\ket{1} = 1$. Therefore, we have
\begin{equation}
\begin{aligned}
-\tfrac{1}{2}\bra{0}\{(\hat{\sigma}_i^+)^\dagger\hat{\sigma}_i^+, \rho_i\}\ket{1} = -\tfrac{1}{2}\rho_i^{01}\,,
\end{aligned}
\end{equation}
yielding $\bra{0}\mathcal{D}[\hat{\sigma}_i^+][\rho_i]\ket{1} = -\frac{1}{2}\rho_i^{01}$.
The same calculation for $\hat{L} = \hat{\sigma}_i^-$ gives $\bra{0}\mathcal{D}[\hat{\sigma}_i^-][\rho_i]\ket{1} = -\frac{1}{2}\rho_i^{01}$, and every jump operator contributes $-\frac{1}{2}\rho_i^{01}$ per unit rate. 

We sum all dissipator terms in~\eqref{eq:master_marginal} and use the row-stochasticity of $W$, obtaining
\begin{equation}
\dot{\rho}_i^{01}\big|_{\mathcal{D}} = -\tfrac{1}{2}\rho_i^{01}\,.
\label{eq:rho01_D}
\end{equation}
Combining~\eqref{eq:rho01_H} and~\eqref{eq:rho01_D}, we obtain the governing equation for $\rho_i^{01}$
\begin{equation}
\dot{\rho}_i^{01} = \Bigl(2i\Omega_i - \tfrac{1}{2}\Bigr)\rho_i^{01}\,,
\label{eq:rho01_full}
\end{equation}
which yields the real and imaginary parts 
\begin{equation} \label{eq:coherence_dynamics}
\begin{aligned}
\dot{c}_i &= -\tfrac{1}{2}\,c_i\,,  \\
\dot{\phi}_i &= -2\sum_j w_{ij}\,x_j\,.
\end{aligned}
\end{equation} 

\bibliographystyle{abbrv}
\bibliography{references}

\end{document}